\theoremstyle{plain}
\newtheorem{theorem}{Theorem}[section]
\newtheorem{lemma}[theorem]{Lemma}
\theoremstyle{definition}
\newtheorem*{definition}{Definition}
\theoremstyle{remark}
\numberwithin{equation}{section}
\newcommand{\R}{\mathbb{R}}
\newcommand{\C}{\mathbb{C}}
\newcommand{\Z}{\mathbb{Z}}
\newcommand{\id}{\boldsymbol{1}}
\DeclareMathOperator{\sgn}{sgn}
\DeclareMathOperator{\Imag}{Im}
\DeclareMathOperator{\tr}{tr}
\DeclareMathOperator{\pf}{pf}
\newcommand{\llangle}{\langle \!\langle}
\newcommand{\rrangle}{\rangle \!\rangle}
 \def\idty{{\mathchoice {\mathrm{1\mskip-4mu l}} {\mathrm{1\mskip-4mu l}} %
{\mathrm{1\mskip-4.5mu l}} {\mathrm{1\mskip-5mu l}}}}
\let\Im\undefined
\DeclareMathOperator{\Im}{Im \,}
\title{Decay of Determinantal and Pfaffian Correlation Functionals in One-dimensional Lattices} 
\author{Robert Sims\\[10pt]
Department of Mathematics\\ University of Arizona\\
Tucson, AZ 85721, USA\\
Email: rsims@math.arizona.edu \\[10pt]
and \\[10pt]
Simone Warzel\\[10pt]
Zentrum Mathematik\\ TU M\"unchen\\
Boltzmannstr. 3, 85747 Garching, Germany\\
Email: warzel@ma.tum.de}
\begin{document}

\maketitle

\begin{abstract} 
We establish bounds on the decay of time-dependent multipoint correlation functionals of one-dimensional quasi-free fermions in terms of the decay properties of their two-point function. At a technical level, this is done with the help of bounds on certain bordered determinants and pfaffians.
These bounds, which we prove, go beyond the well-known Hadamard estimates. 
Our main application of these results is a proof of strong (exponential) dynamical localization of spin-correlation functions in disordered $XY$-spin chains. 
\end{abstract} 

\section{Introduction and main results}

\subsection{Quasi-free Fermions on the lattice}
Systems of (spinless) fermions are described on the fermionic Fock space $ \mathcal{F}(\mathcal{H}) $ over a separable single-particle Hilbert space $ \mathcal{H} $.  Fermionic annihilation and creation operators  $ c(f) $ and $ c^*(f) $ associated with $ f \in \mathcal{H} $ act on $ \mathcal{F}(\mathcal{H}) $ and satisfy the canonical anticommutation relations (CAR), i.e., for any  $ f, g \in \mathcal{H} $ 
\begin{equation}
 \left\{ c(f),  c^*(g)  \right\} =   c(f)  c^*(g) +    c^*(g) c(f) = \langle f, g \rangle \idty
\end{equation}
and all other anticommutators vanish. By $\mathcal{A}( \mathcal{H})$, we denote the $C^*$-algebra generated by the identity, $\idty$, and the operators
$c(f)$ and $c^*(g)$ for all $f,g \in \mathcal{H}$; this is the CAR algebra associated to $\mathcal{H}$. 
A state $\omega$ on $ \mathcal{A}(\mathcal{H}) $ is any normalized, non-negative linear functional.  The state $\omega$, or more generally any linear functional, 
is called quasi-free (or: quasi-gaussian, cf.~\cite{A,BLS94})  if all correlation functions are computed from Wick's theorem.
More precisely, $ \omega $  is quasi-free if for any $ n \in \mathbb{N} $ and any collection $ f_j , g_j \in  \mathcal{H} $ giving rise to operators
\begin{equation}
 C_j := c(f_j) + c^*(g_j) \, , \quad  j = 1 \dots , 2n \, , 
 \end{equation}
we have $ \omega( C_1 \cdots  C_{2n-1} ) = 0 $ and 
\begin{equation}\label{def:quasifree}
\omega( C_1 \cdots  C_{2n} ) = \sum_\pi {}' \sgn \pi  \; \omega\left(C_{\pi(1)} C_{\pi(2)}\right) \cdots  \omega\left(C_{\pi(2n-1)} C_{\pi(2n)}\right) \quad \big[  =: \pf \mathcal{C} \, \big] \, .
\end{equation}
The (primed) sum is over all permutations $ \pi $ which satisfy $ \pi(1) < \pi(3) < \dots < \pi(2n-1) $ and $ \pi(2j-1) < \pi(2j) $ for all $ 1 \leq j \leq n $. The right side is known as the pfaffian of the triangular array 
\begin{equation}\label{eq:triangular}
 \mathcal{C} := \left(  \omega(C_j C_k) \right)_{1\leq j < k \leq 2n }  
\end{equation}
which equivalently  may be identified with a skew-symmetric matrix,  
$ \mathcal{C}^T = - \mathcal{C} $.

In general, fermionic quasi-free states are characterized by 
$
 \omega(C_j C_k)  = \langle \begin{pmatrix} f_k \\ \overline{g_k} \end{pmatrix} , \Gamma \begin{pmatrix} g_j \\ \overline{f_j} \end{pmatrix} \rangle $, in terms of a one-particle density matrix $  \Gamma  $ on $ \mathcal{H} \oplus \mathcal{H} $. The latter
satisfies $ 0 \leq \Gamma \leq 1 $ and is of the from $ \Gamma =  \begin{pmatrix}  \varrho & \alpha \\ \alpha^* & 1- \overline{\varrho} \end{pmatrix} $ where $  \langle \overline{g_k} , (1- \overline{\varrho} )  \overline{f_j}  \rangle := \langle f_j, (1-\varrho) \ g_k \rangle $, cf.~\cite{BLS94}.
For a simple ubiquitous subclass of quasi-free states $ \alpha = 0 $ and hence they are uniquely characterized by a one-particle density operator $ 0 \leq \varrho \leq 1 $ on $ \mathcal{H} $,
\begin{equation}\label{eq:normalstate}
\omega_\varrho\left( C_j C_k \right) = \big\langle f_k, \varrho \ g_j \rangle +   \langle f_j, (1-\varrho) \ g_k \big\rangle \,  .
\end{equation} 
Examples of such one-particle density operators which will play some role in this paper include:
\begin{enumerate}
\item
the thermal equilibrium state  corresponding to a single-particle Hamiltonian $ H $ on $ \mathcal{H} $ with inverse temperature $ \beta > 0 $ and chemical potential $ \mu \in \mathbb{R} $:
\begin{equation}\label{eq:thermalstate}
\varrho = \left( 1 + e^{\beta(H-\mu) } \right)^{-1} \, . 
\end{equation}
In the limit $ \beta \to \infty $, this turns into the fermionic ground state, $ \varrho = P_{(-\infty,\mu)}(H) $, i.e., the spectral projection of $ H $ corresponding to energies strictly below $ \mu$. 
\item those of the form 
\begin{equation}\label{eq:eigenstates}
\varrho_\alpha = \sum_{j : \, \alpha_j = 1 } | \phi_j \rangle \langle \phi_j | 
\end{equation}
with $ \alpha \in \{ 0,1 \}^{\mathbb{N} } $ and $ ( \phi_j )_{j \in \mathbb{N}} $ an orthonormal basis of eigenvectors 
of a single-particle Hamiltonian $ H $ on $ \mathcal{H} $. For $ \alpha $ with the property that $\alpha_j=1$ whenever $ \lambda_j < \mu $ and $\alpha_j=0$ otherwise, this state coincides with the ground state of the non-interacting Fermi system with Fermi energy $ \mu $. Other choices of $ \alpha $ correspond to excited states.
\end{enumerate}
We will be interested in lattice fermions for which $ \mathcal{H} = \ell^2(\mathbb{Z}^d) $ -- and primarily with $ d = 1 $. In the lattice case, for any state  $ \omega_\varrho $ 
of the form~\eqref{eq:normalstate}, the canonical orthonormal basis of 
vectors $ \delta_{\xi} \in \ell^2(\mathbb{Z}^d) $ localized at the lattice sites $ \xi \in \mathbb{Z}^d $ gives rise to time-dependent multipoint correlation functions with determinantal structure, i.e., 
for $ x_1, \dots , x_n , y_1 , \dots , y_n \in \mathbb{Z}^d $ and $s_1 , \dots , s_n , t_1, \dots , t_n \in \mathbb{R} $:
\begin{multline}\label{eq:multipoint-static}
 \omega_\varrho\left( c^*(e^{it_nH} \delta_{y_n}) \cdots c^*(e^{it_1H} \delta_{y_1}) \, c(e^{is_1H}\delta_{x_1}) \cdots c(e^{is_nH} \delta_{x_n})\right) \\
 = \det\left( \langle (e^{is_jH}  \delta_{x_j} , \varrho \, e^{it_kH} \delta_{y_k} \rangle \right)_{1\leq j , k \leq n } \, . 
\end{multline}
Here for a given single-particle Hamiltonian $ H $ on $ \ell^2(\mathbb{Z}^d) $, a simple (free) dynamics is implemented on the corresponding CAR algebra $ \mathcal{A}_d := \mathcal{A}(\ell^2(\mathbb{Z}^d)) $ through
\begin{equation}\label{eq:freedyn}
\tau_t ( c(f) ) := c(e^{-itH}f) \, , \quad t \in \mathbb{R} \, ;
\end{equation}
more general dynamics will be considered later.
(The derivation of~\eqref{eq:multipoint-static} from~\eqref{eq:normalstate}  proceeds by setting $ f_1=\dots = f_n = g_{n+1} = \dots = g_{2n} = 0 $ and $ g_j = e^{it_{n-j+1}H} \delta_{y_{n-j+1}} $, $ f_{n+j} = e^{is_jH} \delta_{x_j} $ for $ j = 1, \dots n $ and using the fact that the arising triangular array~\eqref{eq:triangular} has a block structure in which case the pfaffian reduces to a determinant.)\\

\subsubsection{Determinant bound}
In its simplest form, the basic question of this paper can already be formulated in this setting. Suppose the two-point function decays, i.e., for some $ C, \mu \in (0,\infty) $ and all $ x_j , y_k \in \mathbb{Z}^d $ 
\begin{equation}\label{eq:expdecay1}
\sup_{s,t \in \R}  \left| \langle e^{isH} \delta_{x_j} , \varrho \,  e^{itH} \delta_{y_j} \rangle \right| \ \leq \ C \, e^{-\mu K(|x_j-y_k|) } \, , 
\end{equation}
with some monotone increasing $ K: [0,\infty) \to [0,\infty) $, 
what can one say about the decay of  multipoint correlation functions such as~\eqref{eq:multipoint-static}? 
Before describing our answer to this question, let us first clarify some points.

Exponential decay of the two-point function would correspond to the choice $ K(\tau)  = \tau  $. Slower decay rates such as algebraic decay may be accommodated by another choice such as
$ K(\tau) = \ln(1+ \tau) $. Our basic assumption throughout this paper is:
\begin{description}
\item[Assumption:] \quad  $ K: [0,\infty) \to [0,\infty) $ is monotone increasing and 
there exists some $ \mu_0 \in (0,\infty) $ such that
\begin{equation}
I(\mu_0) := \sum_{\ell=0}^\infty (1+\ell) \, e^{-\mu_0 K(\ell)} < \infty \, . 
\end{equation}
\end{description}

Since the  multipoint correlation functions depend on configurations
$ x = ( x_1, \dots , x_n) \in \mathbb{Z}^{d n} $ and $ y = ( y_1, \dots , y_n) \in \mathbb{Z}^{d n} $, we need to specify our notion of distance for configurations. Since we focus here on one-dimensional systems, $ d = 1 $, we may restrict without loss of generality to fermionic configurations which are naturally ordered: 
\begin{equation}\label{eq:order1}
x_1 < x_2 < \dots < x_n \, , \qquad y_1 < y_2 < \dots < y_n \, . 
\end{equation}
Note that there is only one particle per site, since we are dealing with (spinless) fermions.
In this situation, the notion of distance we will adopt is:
\begin{equation}\label{eq:distance}
	D(x,y) := \max_{j \in \{ 1, \dots, n\} }|x_j-y_j|\, . 
\end{equation}
Some remarks are in order:
\begin{enumerate}
\item
Since we deal with configurations of indistinguishable particles, one might wonder whether this distance is invariant under relabeling of particles. This immediately follows from the fact that
$ D(x,y) = \min_{\pi}  \max_{j \in \{ 1, \dots, n\} } |x_j-y_{\pi(j)}| $ where the minimum is over all permutations $ \pi $ of $ n $ elements. 
\item 
There are other notions of distance then \eqref{eq:distance} which are natural for indistinguishable particles. For example,  $ D_1(x,y) :=  \min_{\pi} \sum_{j=1}^n  |x_j-y_{\pi(j)}| $ is an  option which in our one-dimensional situation turns out to be $  D_1(x,y) = \sum_{j} |x_j-y_j| $ in case of ordered fermionic configurations~\eqref{eq:order1}. 
\end{enumerate}
One result of this paper is the following:
\begin{theorem}[Determinant bound]\label{thm:det}
Let $I \subseteq \mathbb{R}$ and $ \rho(s,t)  $ be a family of uniformly bounded operators on $ \ell^2(\mathbb{Z}) $ with $ \| \rho(s,t)  \| \leq 1 $ for all $ s, t \in I $. If there is some $ C \in(0,\infty)$, $\mu \in (\mu_0,\infty) $ such that for all $ x,y \in \mathbb{Z} $:
\begin{equation}\label{eq:expdecay}
\sup_{s,t \in I} \left| \langle \delta_{x} , \rho(s,t)    \delta_{y} \rangle \right| \ \leq \ C \, e^{-\mu K(|x-y|) } \, ,
\end{equation}
then for any $ n \in \mathbb{N} $ and any pair of fermionic configurations $ x= ( x_1, \dots , x_n) , y = ( y_1, \dots , y_n) \in \mathbb{Z}^n $:
\begin{multline}\label{eq:det}
\sup_{s,t \in I^n} \left| \det\left( \langle \delta_{x_j} , \rho(s_j, t_k)  \delta_{y_k} \rangle \right)_{1\leq j , k \leq n }   \right| \\ 
\leq 8 \max\{ C I(\mu_0) , \sqrt{CI(\mu_0)} \}  \,  \exp\left(-\tfrac{\mu-\mu_0}{2} \, K\left(\tfrac{D(x,y)}{2}\right)\right) \, . 
\end{multline}
\end{theorem}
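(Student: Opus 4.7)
Plan. I would begin by locating an index $j^{*}\in\{1,\dots,n\}$ at which $|x_{j^{*}}-y_{j^{*}}|=D\equiv D(x,y)$ and, using the symmetry $(x,s)\leftrightarrow(y,t)$ of the hypothesis, assume without loss of generality that $y_{j^{*}}=x_{j^{*}}+D$. The key one-dimensional observation is that the orderings~\eqref{eq:order1} then imply $y_k-x_j\ge y_{j^{*}}-x_{j^{*}}=D$ for every $j\le j^{*}$ and every $k\ge j^{*}$, so the $j^{*}\times(n-j^{*}+1)$ ``corner'' submatrix of $M:=(\langle\delta_{x_j},\rho(s_j,t_k)\delta_{y_k}\rangle)_{j,k}$ consists entirely of entries of modulus at most $C\,e^{-\mu K(D)}$. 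Since the dimensions of this block sum to $n+1$, pigeonhole guarantees that for every permutation $\pi\in S_n$ at least one pair $(j,\pi(j))$ falls inside it.

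To convert this into an $n$-independent bound I would expand $\det M=\sum_{S\subseteq\{1,\dots,n\}}\det M^{(S)}$ by multilinearity in the rows, where $M^{(S)}_{jk}$ equals the ``near'' part $M_{jk}\mathbf{1}_{|x_j-y_k|\le D/2}$ if $j\in S$ and the ``far'' complement otherwise. The all-near term vanishes in every Leibniz contribution: for each $j\le j^{*}$ the near-support of row $j$ lies in columns with $y_k\le x_j+D/2\le x_{j^{*}}+D/2<y_{j^{*}}$, i.e.\ in $\{1,\dots,j^{*}-1\}$, so no permutation can assign distinct columns in this set to all $j^{*}$ rows $j\in\{1,\dots,j^{*}\}$. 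For each surviving summand the far rows supply the decay: splitting $\mu K(\ell)\ge \tfrac{\mu-\mu_0}{2}K(D/2)+\mu_0 K(\ell)$ valid for $\ell>D/2$ gives $\|M^{\mathrm{far}}_{j,\cdot}\|_1\le 2C\,I(\mu_0)\,e^{-\frac{\mu-\mu_0}{2}K(D/2)}$, which is both the origin of the exponential factor in the conclusion and the reason why $K(D/2)$ rather than $K(D)$ appears there.

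The main obstacle is that the naive estimate $|\det M^{(S)}|\le\prod_j\|M^{(S)}_{j,\cdot}\|_1$, summed over $S\subsetneq\{1,\dots,n\}$, introduces an unwanted $(2CI(\mu_0))^n$ factor which must be absorbed to match the $n$-independent constant $8\max\{CI(\mu_0),\sqrt{CI(\mu_0)}\}$ in the theorem. To collapse this sum I expect to invoke the ``bordered determinant'' estimate announced in the abstract, applied iteratively to peel off one far row together with its geometrically associated column; the key input beyond the classical Hadamard inequality is the operator-norm hypothesis $\|\rho(s,t)\|\le 1$. The square-root branch $\sqrt{CI(\mu_0)}$ in the maximum should trace back to the elementary interpolation $\min(1,x)\le\sqrt{x}$, with $x=C^2 e^{-2\mu K(\cdot)}$, used at the innermost step of the bordering argument to bound $\ell^{2}$ row norms on the near part of the matrix; the prefactor $8$ and the $\max$ structure then absorb the geometric series produced by the iteration.
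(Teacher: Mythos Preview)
Your geometric setup is correct and matches the paper's: pick the maximizing index $j^{*}$, assume $y_{j^{*}}=x_{j^{*}}+D$ by symmetry, and observe that the ordering forces a corner block of large-distance entries. The paper also splits columns at the midpoint $x_{j^{*}}+D/2$, so your identification of $K(D/2)$ as the relevant scale is right.

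The gap is in how you pass from this structure to an $n$-independent bound. Your multilinear row expansion $\det M=\sum_{S}\det M^{(S)}$ together with Hadamard gives at best $|\det M|\le\sum_{k\ge 1}\binom{n}{k}\epsilon^{k}=(1+\epsilon)^{n}-1$ with $\epsilon=2CI(\mu_0)e^{-(\mu-\mu_0)K(D/2)}$, which is indeed $n$-dependent; you recognize this yourself. But your proposed remedy---an \emph{iterative} application of the bordered-determinant estimate, peeling off one far row at a time, with the square root arising from a $\min(1,x)\le\sqrt{x}$ interpolation---is not what is done and it is not clear it can be made to work: each peeling step would have to preserve the norm condition $\|M\|\le 1$ for the residual matrix, and there is no natural ``geometrically associated column'' for an arbitrary far row.

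The paper's argument is a \emph{single} application of the bordered-determinant bound (Theorem~\ref{Had2.0}): after relabeling $j^{*}\mapsto 1$, write $M$ in the block form \eqref{defM} with the column split dictated by the midpoint rule. The estimate
\[
|\det M|\le |\alpha|+\|v_2\|+\|w_1\|+\|B\|+2\sqrt{\|v_1\|(\|w_1\|+\|B\|)}
\]
then does all the work. The terms $|\alpha|,\|v_2\|,\|w_1\|,\|B\|$ are each bounded by $4CI(\mu_0)e^{-(\mu-\mu_0)K(D/2)}$ via your summation argument (this is the paper's Lemma~\ref{lem:vec}). The square root and the factor $\tfrac12$ in the exponent come from the cross term, using only $\|v_1\|\le\|M\|\le 1$; its origin inside the proof of Theorem~\ref{Had2.0} is the optimization $\min_{\varepsilon>0}(\varepsilon+c/\varepsilon)=2\sqrt{c}$ after a unitary triangularization and a two-case analysis on the last pivot, not an interpolation on row norms. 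So the missing idea is precisely the content of Theorem~\ref{Had2.0} applied once, rather than any iteration over your near/far decomposition.
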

Some remarks apply:
\begin{enumerate}
\item  The theorem does not require $ \rho(s,t) $ to be a reduced density operator -- only the norm bound $ \| \rho(s,t) \| \leq 1 $ is essential. In particular, the choice $ \rho (s,t) = e^{-isH} \varrho \, e^{itH} $ with some reduced density operator $ 0 \leq \varrho \leq 1 $ is admissible, cf.~\eqref{eq:multipoint-static}. 

\item Exponential decay of correlations, i.e., \eqref{eq:expdecay} with $ K(\tau) = \tau $,  is known to occur for thermal states of single-particle systems. More precisely, in case $ \rho = \rho(H) $ with a self-adjoint operator $ H $ on $ \ell^2(\mathbb{Z}^d) $ and any function $ \rho: \mathbb{R} \to \mathbb{C} $, which has an analytic extension to the strip $|  \Imag z | \leq \eta $ on which it is bounded by $ \| \rho \|_\infty $, obeys for all $ x_j,y_k \in \mathbb{Z}^d $:
\begin{equation}\label{eq:CombesThomas}
\left| \langle \delta_{x_j} , \rho(H)  \, \delta_{y_j} \rangle \right| \ \leq \ 18 \sqrt{2} \, \| \varrho \|_\infty  \, e^{-\mu |x_j - y_k | } 
\end{equation}
for any $ \mu > 0 $ such that $ b(\mu) := \sup_{\xi' \in \mathbb{Z}^d}   \sum_{\xi \in \mathbb{Z}^d}  | \langle \delta_{\xi'} , H \delta_\xi \rangle | ( e^{\mu |\xi-\xi'|} - 1 ) < \eta / 2 $, cf.~\cite[Theorem~3]{AG98}. 

In particular, this applies with $ \eta < \pi/\beta $ to the Fermi distribution function, $ \rho(H) = (1+e^{\beta (H-\mu)} )^{-1}  $. 
\end{enumerate}

\subsubsection{Disordered case}
Part of the motivation for Theorem~\ref{thm:det} stems from the analysis of disordered systems. Here the single-particle Hamiltonian $ H $ (as well as the one-particle density matrix $ \varrho $) is a weakly measurable map from some probability space $ (\Omega, \Sigma, \mathbb{P} ) $ into the space of self-adjoint operators on $ \ell^2(\Z^d) $. The most prominent examples of such $ H $ are Anderson-type opeartors, i.e., discrete Schr\"odinger operators of the form $ H = - \Delta + V $ where the multiplication operator $ V $ is given by independent and identically distributed (iid) random variables $ (\nu_\xi ) $ associated to $ \xi \in \mathbb{Z}^d $. The spectral theory of such operators is quite well studied -- the main feature being the existence of a localized phase~\cite{CL,PF,AW}. 
For convenience of the reader, let us  summarize some facts which are important in the following:
\begin{enumerate}
\item
A dynamical characterization of localization involves the eigenfunction correlator which is defined as the total variation measure associated with $ x , y \in \mathbb{Z}^d $: 
\begin{equation}
Q(x,y;I) := \sup_{\substack{f \in L^\infty(\mathbb{R}) \\ \| f \|_\infty \leq 1 }}  \left| \langle \delta_{x} , f(H)  P_I(H) \, \delta_{y} \rangle \right| 
\end{equation}
where $ P_I(H) $ denotes the spectral projection on $ I \subseteq \mathbb{R} $. Strong exponential dynamical localization in $ I $ then refers to the bound:
\begin{equation}\label{eq:expdecaysl}
\mathbb{E}\left[ Q(x,y;I)   \right] \ \leq \ C \, e^{-\mu |x - y | } \, 
\end{equation}
for some $ C, \mu \in (0,\infty) $ and all $ x,y \in \mathbb{Z}^d $. It implies
that the spectrum of $ H $ is almost-surely pure point. In case the latter is simple given by $ \lambda_1 < \lambda_2 < \dots $, the eigenfunction correlator is then $ Q(x,y;I) = \sum_{\lambda_j\in \sigma(H) \cap I } |\phi_j(x)| |\phi_j(y)| $, given in terms of the normalized eigenbasis $ \{ \phi_j \} $ of $ H $. 
\item
Under some reasonable assumptions on the random operator $ H $, strong exponential dynamical localization~\eqref{eq:expdecaysl} is known to occur in case $ d = 1 $ throughout the spectrum, i.e. with $ I = \mathbb{R} $; for details, see~\cite{CL,AW} and references therein. 
\end{enumerate}
The assumption of the following theorem hence applies to 
such random operators $ H $ on $\ell^2(\mathbb{Z}) $ and all (time-evolved) one-particle operators $  \rho(s,t) = \rho(H)  \left( e^{i (t-s) H} \right) $ which are bounded functions of the Hamiltonian such as, for example, thermal-states~\eqref{eq:thermalstate} up to $ \beta = \infty $ or
eigenstates~\eqref{eq:eigenstates} related to $ H $. 
\begin{theorem}[Strong dynamical localization I]\label{thm:detrandom}
Let $I \subseteq \mathbb{R}$ and consider a family  of random operators $\rho(s,t)  $ on $ \ell^2(\mathbb{Z}) $ with $ \| \rho (s,t)\| \leq 1 $ for all $ s, t \in I $ which exhibit localization in the sense that for some $ C, \mu \in (0,\infty) $ and for all $ x, y \in \mathbb{Z} $:
\begin{equation}\label{ass:dis}
\mathbb{E}\left[ \sup_{s,t\in I} \left| \langle \delta_{x} ,  \rho(s,t) \, \delta_{y} \rangle \right| \right] \ \leq \ C \, e^{-\mu |x - y | } \, .
\end{equation}
Then for any $ n \in \mathbb{N} $ and any pair of configurations $ x= ( x_1, \dots , x_n) , y = ( y_1, \dots , y_n) \in \mathbb{Z}^n $:
\begin{equation}\label{eq:detrandom}
\mathbb{E}\left[ \sup_{s,t\in I^n} \left| \det\left( \langle \delta_{x_j} , \rho(s_j,t_k) \, \delta_{y_k} \rangle \right)_{1\leq j , k \leq n }   \right| \right] \leq \frac{8 \max\{ C, \sqrt{C}\}}{(1 - e^{-\mu})^2 }\,  \exp\left(-\tfrac{\mu}{4} \, D(x,y)\right) \, . 
\end{equation}
\end{theorem}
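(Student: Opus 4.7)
The natural plan is to adapt the proof of Theorem~\ref{thm:det} to the random setting, replacing pointwise bounds by their expected versions at each step, rather than invoking Theorem~\ref{thm:det} as a black box. A direct reduction via a ``random enhancement constant'' $C(\omega):=\max_{j,k\le n}\sup_{s,t\in I}|\langle\delta_{x_j},\rho(s,t;\omega)\delta_{y_k}\rangle|\,e^{\nu|x_j-y_k|}$ with $\nu\in(0,\mu)$ and subsequent pointwise use of Theorem~\ref{thm:det} only yields a prefactor polynomial in $n$ and $D(x,y)$: with $K(\tau)=\tau$ the two parameters $\nu$ and $\mu_0<\nu$ cannot simultaneously be pushed to $\mu$ and $0$, yet this would be needed to match both the exponent $\mu/4$ and the prefactor $(1-e^{-\mu})^{-2}=I(\mu)$ in \eqref{eq:detrandom}.

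The key refinement is that the bordered-determinant/pfaffian estimates underlying Theorem~\ref{thm:det} control $|\det M|$ by a lattice sum of terms of the form (single matrix element)$\,\times\,$(subdeterminant), with the subdeterminant bounded by $1$ because $\|\rho(s,t)\|\le 1$. The deterministic proof extracts $e^{-(\mu-\mu_0)K(D/2)/2}$ as a pointwise factor and controls the residual sum by $I(\mu_0)$, forcing $\mu_0<\mu$. In the random setting, the averaging bound \eqref{ass:dis} can instead be applied termwise to the same sum of matrix elements: no splitting is needed, so the geometric series evaluates at the full decay rate $\mu$, producing the prefactor $\sum_{\ell\ge 0}(1+\ell)e^{-\mu\ell}=(1-e^{-\mu})^{-2}$ and preserving the full exponential factor $e^{-\mu K(D/2)/2}=e^{-\mu D(x,y)/4}$ in the conclusion.

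Concretely I would (i) fix an index $j_0$ realizing $\max_j|x_j-y_j|=D(x,y)$ and perform a row/column expansion of $\det M$ along $j_0$, so that each term isolates a single matrix element $F_{j_0,k}:=\sup_{s,t\in I}|\langle\delta_{x_{j_0}},\rho(s,t)\delta_{y_k}\rangle|$ multiplied by a subdeterminant of norm $\le 1$; (ii) take expectations, exchange $\mathbb{E}$ with the finite sum, and estimate each $\mathbb{E}[F_{j_0,k}]$ by \eqref{ass:dis}, using the sortedness of $x$ and $y$ to convert the distances $|x_{j_0}-y_k|$ into a geometric series centered at $D(x,y)$; (iii) close the recursion with a single Cauchy--Schwarz/Jensen step -- this is the origin of the $\max\{C,\sqrt{C}\}$ appearing in \eqref{eq:detrandom}, the square root being produced when Jensen's inequality is pulled through $\mathbb{E}$. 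The main obstacle is to arrange each bordered step so that it extracts exactly \emph{one} random matrix element rather than a product of several dependent ones, since independence of $\{\langle\delta_x,\rho(s,t;\omega)\delta_y\rangle\}_{x,y}$ is not assumed and the expectation of a product would not factor; the cure is precisely the a priori bound $\|\rho\|\le 1$, which absorbs every other random factor in a product into the subdeterminant estimate.
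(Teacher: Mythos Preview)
Your high-level strategy in the first two paragraphs is exactly what the paper does: rerun the deterministic argument behind Theorem~\ref{thm:det}, take expectations termwise on the resulting lattice sums, and use Jensen on the square-root term (this is indeed the source of $\max\{C,\sqrt{C}\}$ and of the full rate $\mu$ in place of $\mu-\mu_0$).

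However, the concrete plan (i)--(iii) has a real gap. A plain Laplace expansion of $\det M$ along row $j_0$ gives
\[
|\det M|\ \le\ \sum_{k=1}^{n}\sup_{s,t}\bigl|\langle\delta_{x_{j_0}},\rho(s,t)\,\delta_{y_k}\rangle\bigr|,
\]
and after averaging this is $\le C\sum_k e^{-\mu|x_{j_0}-y_k|}$. Sortedness of $x$ and $y$ does \emph{not} force this to be a geometric series centred at $D(x,y)$: take $n=3$, $x=(-2,-1,0)$, $y=(1,2,100)$, so $j_0=3$, $D(x,y)=100$, yet $\sum_k e^{-\mu|x_3-y_k|}=e^{-\mu}+e^{-2\mu}+e^{-100\mu}\approx e^{-\mu}$, with no trace of $e^{-\mu D/4}$. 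The point is that many $y_k$ can sit near $x_{j_0}$ even when $y_{j_0}$ is far away. So step~(ii) fails as stated, and the ``single matrix element $\times$ subdeterminant'' picture of the bordered estimate is not accurate.

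What the paper actually uses is Theorem~\ref{Had2.0}: after relabelling so that the optimizing pair sits in the first row/column, one splits the remaining indices at the \emph{midpoint} $x_1+\tfrac12 D(x,y)$ into a $p\times p$ block and a $q\times q$ block, obtaining
\[
|\det M|\ \le\ |\alpha|+\|v_2\|+\|w_1\|+\|B\|+2\sqrt{\|v_1\|\bigl(\|w_1\|+\|B\|\bigr)}\,.
\]
Lemma~\ref{lem:vec} then bounds each of $\|v_2\|,\|w_1\|,\|B\|$ by a one- or two-fold lattice sum of $\hat\rho$-values whose arguments are separated by at least $D/2$ (this is where the midpoint cut is essential). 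Taking $\sup_{s,t}$ inside, then $\mathbb{E}$, and applying \eqref{ass:dis} term by term gives geometric series at rate $\mu$ summing to $(1-e^{-\mu})^{-1}$ or $(1-e^{-\mu})^{-2}$ times $e^{-\mu D/2}$; Jensen on the square-root term costs the additional factor $\tfrac12$ in the exponent. Your proposal becomes correct once you replace the Laplace step in~(i) by this block decomposition and invoke Theorem~\ref{Had2.0} and Lemma~\ref{lem:vec} directly.
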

Several remarks apply:
\begin{enumerate}
\item The case $ \rho(s,t) = e^{-itH} $ in Theorem~\ref{thm:detrandom} includes a statement on the determinant of the time evolution operator $ e^{-itH } $ projected to a pair of configurations $ x , y \in \mathbb{Z}^n $. In \cite{BK} this quantity arises in the analysis of an error-correcting code for a one-dimensional chain of Majorana fermions. More precisely, the random one-particle Hamiltonian $ H $ on $ \ell^2(\{1,\dots, N\}) $ is dubbed in~\cite{BK} \emph{multipoint dynamical localized} if there are constants $ C , \mu \in (0,\infty) $ such that for all $ n \leq N  $ sufficiently large:
\begin{equation}\label{eq:mpdl}
\sup_{t\in \mathbb{R}} \,  \mathbb{E}\left[  \left| \det\left( \langle \delta_{x_j} , e^{itH} \delta_{y_k} \rangle \right)_{1\leq j , k \leq n } \right| \right] \ \leq \ C^n e^{-\mu N} \, , 
\end{equation}
for all configurations $ x , y \in \mathbb{Z}^n $ with $ D_1(x,y) \geq N/8 $.

Eq,~\eqref{eq:detrandom} is weaker in case $ 1 \ll n \ll N $, since 
$ D_1(x,y) \leq n \, D(x,y)  $. It is an interesting open question whether~\eqref{eq:mpdl} holds in the regime of strong-dynamical one-particle localization~\eqref{eq:expdecaysl}.     
\item The exponential decay in (\ref{ass:dis}), and then subsequently in (\ref{eq:detrandom}), can also be replaced by a slower or faster decay (captured by $ K $) as in Theorem~\ref{thm:det}. 
It is an interesting open question whether the above Theorems~\ref{thm:det} and \ref{thm:detrandom} can be generalized to higher dimensions. 
\end{enumerate}

\subsection{Majorana Fermions}\label{ch:Majorana}

For lattice fermions one may associate to each site $ x \in \mathbb{Z}^d $ a pair of Majorana fermions:
\begin{align}\label{def:majorana}
 a_x^+  := c^*(\delta_x) +  c(\delta_x)  \, , \quad 
 a_x^-  :=  i \left( c^*(\delta_x)  - c(\delta_x) \right) \, ,
\end{align}
These operators are self-adjoint,  $  (a_x^\# )^* =  a_x^\# $, and  satisfy $ (a_x^\# )^2 = 1 $ for both $ \# = \pm  $. They obey the anticommutation relations
\begin{equation}\label{MajoranaCAR}
 \left\{ a_x^\# , a_y^\flat \right\} = 2 \delta_{x,y} \delta_{\#,\flat} \, \idty . 
 \end{equation}
The free dynamics~\eqref{eq:freedyn} generated by a single-particle Hamiltonian $ H $ carries over to the Majorana fermions
\begin{equation}
a_x^\#(t) := \tau_t\left( a_x^\# \right) \, . 
\end{equation}
Given such a dynamics and a quasi-free state $ \omega $  the dynamical multipoint correlation functions 
are of the form
\begin{equation}\label{eq:quasifree}
\omega\left( a_{x_1}^{\#_1}(t_1) \dots a_{x_{2n}}^{\#_{2n}}(t_{2n}) \right) = \pf\left( \omega\left(   a_{x_j}^{\#_j}(t_j)  a_{x_k}^{\#_k}(t_k) \right)\right)_{1\leq j < k \leq 2n }  \, . 
\end{equation}
In the subsequent theorem, we envoke the following definition.
\begin{definition}
A pair $ (\omega,\tau) $ of a functional $ \omega $ and automorphisms $ \tau = \{ \tau_t \}_{t\in \mathbb{R}} $ on the CAR algebra $\mathcal{A}_d:= \mathcal{A}(\ell^2(\mathbb{Z}^d)) $ is called quasi-free, if the Wick relation~\eqref{eq:quasifree} holds for all $ n \in \mathbb{N} $ at all  $  (x_1, \#_1), \dots , (x_{2n},\#_{2n}) \in \mathbb{Z}^d \times \{\pm \} $, and all times $ (t_1,\dots , t_{2n}) \in \mathbb{R}^{2n} $.  
\end{definition}
As before, our main concern will be the decay rate of such multipoint correlation functions, given information about the decay of the two-point function. Since the multipoint correlation function involves
a collection $ x := (x_1,\dots , x_{2n} ) \in \mathbb{Z}^{2nd} $ of points, we again first need to quantify the relevant notion of distance for this collection. 
In the one-dimensional situation, $ d = 1 $, the points may be ordered without loss of generality
\begin{equation}\label{eq:ordermaj} 
 x_1 \leq x_2 \leq \dots \leq x_{2n-1} \leq x_{2n} \, . 
\end{equation}
Note that in contrast to~\eqref{eq:order1} these points are not necessarily distinct since $ x_j \in \mathbb{Z} $ may carry two Majoranas: one with $ \# = + $ and one with $ \# = - $. We will call $  (x_1, \#_1), \dots , (x_{2n},\#_{2n}) \in \mathbb{Z} \times \{\pm \} $ a Majorana configuration if these tupels are distinct for all $ j \in \{ 1,\dots , 2n\} $. A natural notion of distance for such an ordered Majorana configuration is
\begin{equation}
r(x) := \max_{j\in \{ 1, \dots , n \} } | x_{2j} - x_{2j-1} | \, . 
\end{equation}
In this context our first main result  then reads as follows:
\begin{theorem}[Pfaffian bound]\label{thm:pf}
Let $ (\omega, \tau) $ be a quasi-free pair on the CAR algebra $ \mathcal{A}_1  $ and assume that $ \omega $ is a bounded functional, i.e., there is
some $M_0 \in (0, \infty)$ such that $ |\omega(A) | \leq M_0 \| A \| $ for all $ A \in \mathcal{A}_1$.
Let $ I \subset \mathbb{R} $ and suppose there is some $ C \in(0,\infty)$, $\mu \in (\mu_0,\infty) $ such that for all $ x,y \in \mathbb{Z} $:
\begin{equation}\label{eq:expdecay33}
\max_{\#,\flat \in \{\pm \} } \sup_{s,t \in I} \left|\omega\left(   a_{x}^{\#}(t)  a_{y}^{\flat}(s) \right) \right| \ \leq \ C \, e^{-\mu K(|x-y|) } \, .
\end{equation}
Then there is some $ C'=C'(\mu_0) $ such that for any $ n \in \mathbb{N} $, and any Majorana configuration $ (x_1, \#_1), \dots , (x_{2n},\#_{2n}) \in \mathbb{Z} \times \{\pm \}  $:
\begin{equation}\label{eq:pf}
\sup_{t \in I^{2n}} \left| \pf\left( \omega\left(   a_{x_j}^{\#_j}(t_j)  a_{x_k}^{\#_k}(t_k) \right)\right)_{1\leq j < k \leq 2n }    \right| \\ \leq  M_0 \, C'(\mu_0) \,   e^{-(\mu -\mu_0) K(r(x)) } \, . 
\end{equation}
\end{theorem}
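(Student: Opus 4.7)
The plan is to decompose the $2n\times 2n$ Majorana covariance matrix $\mathcal{M}=\bigl(\omega(a_{x_j}^{\#_j}(t_j)a_{x_k}^{\#_k}(t_k))\bigr)_{j<k}$ using the extremal pair $(2j^\ast-1,2j^\ast)$ that realises $r(x)$, and to develop $\pf\mathcal{M}$ in powers of the resulting cross-block. Pick $j^\ast\in\{1,\dots,n\}$ with $|x_{2j^\ast}-x_{2j^\ast-1}|=r(x)$ and partition $\{1,\dots,2n\}=L\sqcup R$ with $L:=\{1,\dots,2j^\ast-1\}$ and $R:=\{2j^\ast,\dots,2n\}$, both of odd cardinality. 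The ordering \eqref{eq:ordermaj} and the monotonicity of $K$ force $x_k-x_j\geq r(x)$ whenever $j\in L$, $k\in R$; hence the cross-block $B$ in
$$
\mathcal{M}=\begin{pmatrix}A & B\\-B^T & D\end{pmatrix}
$$
satisfies $|B_{jk}|\leq C\,e^{-(\mu-\mu_0)K(r(x))}\,e^{-\mu_0 K(x_k-x_j)}$. Substituting $\alpha:=x_{2j^\ast-1}-x_j\geq 0$ and $\beta:=x_k-x_{2j^\ast}\geq 0$, and using that at most two Majoranas sit at any one lattice site, the double sum collapses through $\sum_{m\geq 0}(m+1)e^{-\mu_0 K(m)}=I(\mu_0)$ to
$$
S:=\sum_{j\in L,\,k\in R}|B_{jk}|\;\leq\;4\,C\,I(\mu_0)\,e^{-(\mu-\mu_0)K(r(x))}.
$$

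Next, I apply the Grassmann--Laplace block expansion of the Pfaffian: inserting $e^{\theta_L^T B\theta_R}=\sum_{k\geq 0}(\theta_L^T B\theta_R)^k/k!$ into $\pf\mathcal{M}=\int e^{\frac12\theta^T\mathcal{M}\theta}\,d\theta$ yields
$$
\pf\mathcal{M}=\sum_{k\geq 1\text{ odd}}\;\sum_{\substack{I\subseteq L,\,J\subseteq R\\|I|=|J|=k}}\epsilon(I,J)\,\det(B|_{I\times J})\,\pf(A|_{L\setminus I})\,\pf(D|_{R\setminus J})
$$
with signs $\epsilon(I,J)\in\{\pm 1\}$; only odd $k$ contributes, because $|L\setminus I|$ and $|R\setminus J|$ must be even for the diagonal sub-Pfaffians to exist. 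By the quasi-free identity \eqref{eq:quasifree}, each sub-Pfaffian is an $\omega$-correlator of a product of Majorana operators whose operator norm is $\leq 1$, so the bounded-functional hypothesis gives $|\pf(A|_{L\setminus I})|,\,|\pf(D|_{R\setminus J})|\leq M_0$.

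For the cross-block determinants, Leibniz's expansion together with the symmetrisation
$$
\sum_{|I|=|J|=k}|\det(B|_{I\times J})|\;\leq\;\sum_{\text{matchings of size }k}\prod_{i=1}^k|B_{j_i,l_i}|\;\leq\;\frac{S^k}{k!}
$$
(each ordered $k$-tuple of pairs contributes once to $S^k$, each matching is counted $k!$ times) produces
$$
|\pf\mathcal{M}|\;\leq\;M_0^2\sum_{k\geq 1\text{ odd}}\frac{S^k}{k!}\;=\;M_0^2\sinh S\;\leq\;S\,M_0^2\,e^{S},
$$
which matches the announced $|\pf\mathcal{M}|\leq M_0\,C'(\mu_0)\,e^{-(\mu-\mu_0)K(r(x))}$ with $C'(\mu_0):=4\,M_0\,C\,I(\mu_0)\,\exp(4\,C\,I(\mu_0))$, after inserting $S\leq 4\,C\,I(\mu_0)$.

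The main obstacle is keeping the $(I,J)$ double sum at each fixed $k$ independent of $n$: a naive pointwise estimate produces $\binom{|L|}{k}\binom{|R|}{k}$-many terms, which blow up with $n$. The $(\alpha,\beta)$ substitution together with the summability $I(\mu_0)<\infty$ converts this potential combinatorial explosion into the controlled geometric factor $S^k/k!$, mirroring the role played by $I(\mu_0)$ in the proof of Theorem~\ref{thm:det}.
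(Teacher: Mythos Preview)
Your argument is correct and takes a genuinely different route from the paper's. The paper singles out only the two extremal indices $2j_o-1,2j_o$ as a ``border'' and proves a bespoke bordered-Pfaffian estimate (Theorem~\ref{thm:Pfaffians}) via two Laplace expansions plus a kernel trick: since the residual upper-left block is skew-symmetric of \emph{odd} size, its rows are linearly dependent, and a suitable row/column operation replaces that block by a single row controlled by $\|B\|_{\infty,\infty}$. Your approach instead splits the full index set into the odd blocks $L,R$ and expands $\pf\mathcal{M}$ completely over cross-block minors via the Grassmann identity; the summability then comes from the permanent bound $\sum_{I,J}|\det B_{I,J}|\le S^k/k!$. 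Your method is more systematic and avoids the row-reduction trick of Lemma~\ref{eq:OffDPf}, at the price of requiring the block-Pfaffian expansion formula. The paper's approach is more elementary (only Laplace expansion and linear algebra) and, because each term in its expansion carries a \emph{single} sub-Pfaffian, produces exactly one factor of $M_0$.

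One small discrepancy: your final bound carries $M_0^2$, so your constant $C'(\mu_0)=4\,M_0\,C\,I(\mu_0)\,e^{4CI(\mu_0)}$ depends on $M_0$, whereas the theorem as stated promises a prefactor $M_0\cdot C'(\mu_0)$ with $C'$ independent of $M_0$. This is an artifact of the block expansion producing a product $\pf(A|_{L\setminus I})\pf(D|_{R\setminus J})$, each bounded separately by $M_0$. It does not affect the decay rate, but it is strictly weaker than what is claimed; the paper's bordered expansion is precisely what recovers the single factor of $M_0$.
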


Let us stress that we do not assume here that $ \omega $ is a state: it neither needs to be non-negative nor normalized. Only its Gaussian nature and boundedness are essential.

Similar to before, this bounds carries over to the random case.

\begin{theorem}[Strong dynamical localization II]\label{thm:majodis}
Let $ (\omega, \tau) $ be a random quasi-free pair on the CAR algebra $\mathcal{A}_1  $ and assume that $ \omega $ is a bounded functional, i.e., 
there is some random $ M_0 \in (0,\infty) $ such that $ |\omega(A) | \leq M_0 \| A \| $ for all $ A \in \mathcal{A}_1$. 
Let $ I \subset \mathbb{R} $ and suppose there is some (non-random) $ C, \mu  \in(0,\infty)$ such that for all $ x,y \in \mathbb{Z} $:
\begin{equation}\label{ass:majdis}
\max_{\#,\flat \in \{\pm \} } \mathbb{E} \left[\sup_{s,t \in I} \left|\omega\left(   a_{x}^{\#}(t)  a_{y}^{\flat}(s) \right) \right| \right] \ \leq \ C \, e^{-\mu |x-y| } 
\end{equation}
Then there is some $ C' = C'(\mu) \in (0,\infty) $ such that  for any $ n \in \mathbb{N} $, and any Majorana configuration $ (x_1, \#_1), \dots , (x_{2n},\#_{2n}) \in \mathbb{Z} \times \{\pm \}  $:
\begin{equation}\label{eq:pf233}
\mathbb{E}\left[\sup_{t \in I^{2n}} \frac{1}{M_0} \left| \pf\left( \omega\left(   a_{x_j}^{\#_j}(t_j)  a_{x_k}^{\#_k}(t_k) \right)\right)_{1\leq j < k \leq 2n }    \right| \right]  \leq   C'(\mu) \,  e^{- \mu  r(x) /3   } \, . 
\end{equation}
\end{theorem}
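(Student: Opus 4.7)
The plan is to mirror the reduction used for Theorem \ref{thm:detrandom} and deduce the random pfaffian estimate from the deterministic Theorem \ref{thm:pf} via a Chebyshev/Markov argument. The key observation is that the deterministic assumption \eqref{eq:expdecay33} is recovered pointwise in $\omega$ on a large-probability ``good event,'' while on its complement one can fall back on the \emph{a priori} bound $|\pf(\mathcal{M}(t))| \leq M_0$.

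More concretely, I would split the rate by setting $\mu_0 := \mu/3$ and $\mu_* := 2\mu/3$, so that Theorem \ref{thm:pf} applied with constants $(C,\mu)$ replaced by $(1,\mu_*)$ delivers exactly the target decay $e^{-(\mu_*-\mu_0) K(r(x))} = e^{-\mu r(x)/3}$. Introducing the random two-point quantities
\[
X_{jk}(\omega) := \sup_{s,t \in I} \left|\omega\bigl(a_{x_j}^{\#_j}(s)\, a_{x_k}^{\#_k}(t)\bigr)\right|
\]
and the good event $G := \bigcap_{1 \leq j<k \leq 2n} \{X_{jk} \leq e^{-\mu_* |x_j - x_k|}\}$, Theorem \ref{thm:pf} yields on $G$ the bound $|\pf(\mathcal{M}(t))|/M_0 \leq C'(\mu/3)\, e^{-\mu r(x)/3}$. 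On $G^c$ I would use that $\pf(\mathcal{M}(t)) = \omega(a_{x_1}^{\#_1}(t_1) \cdots a_{x_{2n}}^{\#_{2n}}(t_{2n}))$ (by the Wick relation built into the quasi-free pair $(\omega,\tau)$), combined with $|\omega(A)| \leq M_0\|A\|$ and unitarity of each Majorana operator, to conclude $|\pf(\mathcal{M}(t))|/M_0 \leq 1$. Finally Markov's inequality together with \eqref{ass:majdis} gives
\[
\mathbb{P}(G^c) \leq \sum_{j<k} \mathbb{E}\bigl[X_{jk}\bigr]\, e^{\mu_* |x_j - x_k|} \leq C \sum_{j<k} e^{-\mu |x_j - x_k|/3},
\]
and summing the two contributions yields the claim, provided the last sum is controlled.

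The main technical obstacle is precisely the control of this double sum by a constant (independent of $n$) times $e^{-\mu r(x)/3}$: a naive row-by-row geometric series only produces a bound of order $n$, incompatible with the $n$-independent constant $C'(\mu)$. I expect the resolution is to split the bad event according to which particular pair $(j,k)$ first violates its deterministic bound, and to exploit that only pairs with $|x_j - x_k| \gtrsim r(x)$ can supply the required exponential tail; the remaining ``close'' pairs---of which there are $O(1)$ per lattice site, since a Majorana configuration carries at most two points per site---are reabsorbed by the \emph{a priori} bound on the corresponding sub-pfaffians. This is the pfaffian counterpart of the $(1-e^{-\mu})^{-2}$ geometric-series factor appearing in Theorem \ref{thm:detrandom}. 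Alternatively, one may insert the Markov step directly inside the bordered-pfaffian iteration that underlies Theorem \ref{thm:pf}, so as to avoid a global union bound over the $\binom{2n}{2}$ pairs altogether.
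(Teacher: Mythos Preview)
Your proposal correctly identifies the obstacle but does not resolve it, and the paper takes a route that sidesteps it entirely.

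The good-event/bad-event decomposition fails for exactly the reason you state: the union bound over $\binom{2n}{2}$ pairs produces $\sum_{j<k} e^{-\mu|x_j-x_k|/3}$, which is of order $n$ (any Majorana configuration has $O(n)$ pairs at distance $O(1)$) and cannot be dominated by $C'\,e^{-\mu r(x)/3}$. Your first proposed fix---discarding ``close'' pairs because they should not affect the critical blocks---does not work either: the constant $C'(\mu_0)$ in Theorem~\ref{thm:pf} comes from the geometric sums bounding $\|v_1\|_1$ and $\|w_2\|_1$ via \eqref{eq:expdecay33}, and these require control of the two-point function at \emph{all} separations, including small ones. Your second fix points in the right direction but is not a Markov/good-event argument at all.

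The paper avoids any event decomposition. It applies the bordered-pfaffian estimate of Theorem~\ref{thm:Pfaffians} directly,
\[
\frac{|\pf M|}{M_0}\;\leq\;|\alpha|+\|v_2\|_1+\|v_1\|_1\|w_1\|_1+\|v_1\|_1\|w_2\|_1\|B\|_{\infty,\infty}\,,
\]
a sum of products of at most \emph{three} factors, each a $1$-norm of a block of two-point functions. The a~priori bound $|\pf M|/M_0\leq 1$ is used only to pass to $(|\pf M|/M_0)^{1/3}$; subadditivity of $t\mapsto t^{1/3}$ and H\"older's inequality with exponents $(3,3,3)$ then separate the expectations, and Lemma~\ref{lem:vec33} together with \eqref{ass:majdis} bounds each $\mathbb{E}[\|\cdot\|_1]$ by a convergent geometric series. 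The worst term is $\mathbb{E}[\|v_1\|_1]^{1/3}\mathbb{E}[\|w_2\|_1]^{1/3}\mathbb{E}[\|B\|_{\infty,\infty}]^{1/3}$, in which only the $B$-factor carries $e^{-\mu r(x)}$; this is the origin of the exponent $\mu/3$. In other words, the $1/3$ is forced by the trilinear structure of Theorem~\ref{thm:Pfaffians}, not by a rate-splitting $\mu=\mu_0+\mu_*$ as in your scheme.
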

Let us conclude with two remarks:
\begin{enumerate}
\item It is straightforward to see from the subsequent proof that we may extend Theorem~\ref{thm:majodis} to the case in which $ \omega $ depends on an additional parameter $\alpha $. If one assumes uniform exponential decay in the sense that~\eqref{ass:majdis} holds with an additional supremum over $ \alpha $ inside the expectation, then~\eqref{eq:pf233} holds with an additional supremum inside the expectation. 
\item Again, the above theorem has a straightforward generalization to the case that the two-point function decays at a rate given by  $ K $.
\end{enumerate}

\section{Time-dependent correlations in random $XY$ spin chains}
Our main application of Theorem~\ref{thm:majodis} concerns the correlation functions of a (random) spin-$\tfrac{1}{2}$  chain.
More precisely, we consider an anisotropic spin chain of length $N \in \mathbb{N} $ with the Hamiltonian
\begin{equation} \label{xychain}
S_N = - \sum_{\xi=1}^{N-1} \mu_\xi [ (1+\gamma_\xi) \sigma_\xi^1 \sigma_{\xi+1}^1 + (1-\gamma_\xi) \sigma_\xi^2 \sigma_{\xi+1}^2] - \sum_{\xi=1}^N \nu_\xi \sigma_\xi^3 
\end{equation}
which  acts on the Hilbert space $\mathcal{H}_N = \bigotimes_{\xi =1}^N \mathbb{C}^2$. 
The real-valued sequences$\{ \mu_\xi \}$, $\{ \gamma_\xi \}$, and $\{ \nu_\xi \}$ are the parameters 
of the model which can be physically interpreted as an interaction strength, the anisotropy, and an 
external magnetic field in $ 3 $-direction, respectively, and 
\begin{equation}
\sigma^1 = \left( \begin{array}{cc} 0 & 1 \\ 1 & 0 \end{array} \right), \quad \sigma^2 = \left( \begin{array}{cc} 0 & -i \\ i & 0 \end{array} \right), \quad \mbox{and} \quad 
\sigma^3 = \left( \begin{array}{cc} 1 & 0 \\ 0 & -1 \end{array} \right)
\end{equation}
denote the Pauli matrices. By the subscripts $\xi \in \{1,\dots,N\}$, we embed these matrices into $\mathcal{B}(\mathcal{H}_N)$, i.e., $ \sigma_\xi^w = \idty \otimes \cdots \otimes \idty \otimes \sigma^w \otimes \idty \otimes \cdots \otimes \idty $ for any
$w \in \{1,2 ,3 \}$
with $\sigma^w$ appearing in the $\xi$th factor. 

The dynamics generated by the Hamiltonian $S_N$ is the one-parameter group of
automorphisms on $\mathcal{B}( \mathcal{H}_N)$ given by
\begin{equation}
\tau_t^N(A) = e^{itS_N} A e^{-itS_N} \quad \mbox{for all } A \in \mathcal{B}( \mathcal{H}_N) \mbox{ and } t \in \mathbb{R} \, .
\end{equation}
We are interested in dynamic correlations between general single-site observables. More concretely, for any
$1 \leq \xi \leq N$ denote by $\mathcal{A}_{\{ \xi\}}$ the set of observables with support $\{ \xi \}$.
With $1 \leq \xi < \eta \leq N$ fixed, let $A \in \mathcal{A}_{\{ \xi \}}$ and $B \in \mathcal{A}_{\{ \eta \}}$. We consider
\begin{equation}\label{eq:corr0}
\langle \tau_t(A) B \rangle - \langle \tau_t(A) \rangle \langle B \rangle  \, , \quad \mbox{with}\quad 
 \langle \cdot \rangle := \tr \left( \rho(S_N) \; (\cdot) \right) \, .
\end{equation} 
The states $\langle \cdot \rangle$ are described in terms of their density matrices $ \rho(S_N) \geq 0 $. 
We will mainly consider either eigenstates or thermal states associated to $ S_N $, i.e., 
\begin{align}\label{def:state}
 \rho(S_N)  = \begin{cases}
 	| \Psi_\alpha \rangle   \langle \Psi_\alpha | \, , \quad & \mbox{eigenstate of $ S_N $ with label $ \alpha $,}\\
	  e^{-\beta S_N} / \tr e^{-\beta S_N }  \, , & \mbox{thermal state with inverse temperature $ \beta $.}  
	  \end{cases}
\end{align}
In order to distinguish the two cases, we will sometimes include a subscript $ \alpha $ (in case of an eigenstate to be described below) or $ \beta $ (in case of a thermal state). 
Note that since these are expectations in a state whose density matrix commutes with $ S_N $, it is clear that they are time invariant, i.e. $\langle \tau_t(A) \rangle = \langle A \rangle $. 
To calculate the correlations~\eqref{eq:corr0}, we first expand the single-site observables in terms of a basis. Any $A \in \mathcal{A}_{\{ \xi \}}$ can be written as:
$
A = a_0 \idty + a_1 \sigma_\xi^1 + a_2 \sigma_\xi^2  + a_3 \sigma_\xi^3 = \sum_{w=0}^3 a_w \sigma^w_\xi
$
and we have set $\sigma_\xi^0 = \idty$ for convenience. As a result,
\begin{equation}\label{eq:corrfct}
\langle \tau_t(A) B \rangle - \langle A \rangle \langle B \rangle = 
\sum_{w, w'=1}^3 a_w b_{w'} \left( \langle \tau_t(\sigma_\xi^w) \sigma_\eta^{w'} \rangle - \langle \sigma_\xi^w \rangle \langle \sigma_\eta^{w'} \rangle \right) \, .
\end{equation}
In order to estimate these correlation functions, we relate them to correlations of free Majorana fermions using the well known Jordan-Wigner transformation~\cite{JW,LM}.

\subsection{Jordan-Wigner transformation in terms of Majorana Fermions}\label{ss:JW}
The operators
\begin{align} \label{cdef}
& a_1^+ = \sigma^1_1 \qquad  \mbox{and} \quad a_\xi^+ = \sigma_{1}^3 \cdots \sigma_{\xi-1}^3 \sigma^1_\xi \quad \mbox{for all } 2 \leq \xi \leq N \, , \notag  \\
& a_1^- = - \sigma_1^2 \quad \mbox{and} \quad a_\xi^- = - \sigma_{1}^3 \cdots \sigma_{\xi-1}^3 \sigma^2_\xi \quad \mbox{for all } 2 \leq \xi \leq N \, ,
\end{align}
are self-adjoint, $ (a_\xi^\# )^* = a_\xi^\# $, and satisfy $ (a_\xi^\# )^2 = 1 $ as well as the anti-commutation rules~\eqref{MajoranaCAR} for Majorana fermions.
A short calculation also shows that 
$
i a_\xi^+ a_\xi^- = \sigma_\xi^3 $
and that the Hamiltonian coincides with the following quadratic form 
\begin{equation} \label{eq:Crep}
S_N =   \frac{1}{2} \,  \mathcal{A}^T H_N \mathcal{A}
\end{equation}
in terms of the vector $ 
\mathcal{A}= (a_1^+, a_1^-, \ldots, a_N^+, a_N^-)^T $. 
The $2N \times 2N$ coefficient matrix $H_N$ is  self-adjoint and of Jacobi block-form (with blocks composed of Pauli matrices):
\begin{equation}\label{blockmatrix}
H_N =\begin{pmatrix}
            \nu_1 \sigma^2 & - \mu_1S(\gamma_1) & &  \\
          - \mu_1 S(\gamma_1)^* &    \nu_2 \sigma^2 &  \ddots  &  \\
           & \ddots & \ddots &   -\mu_{N-1} S(\gamma_{N-1}) \\
           &  &  -\mu_{N-1} S(\gamma_{N-1} )^*  & \nu_N \sigma^2
          \end{pmatrix} \, 
\end{equation}
where $S(\gamma) := \sigma^2 + i \gamma \sigma^1$. The operator $ H_N  $ acting on $ \ell^2(\{1,\dots, N\} ; \mathbb{C}^2) $ will be referred to as the single-particle Hamiltonian. 
Let us briefly comments on some of its properties:
\begin{enumerate}
\item 
Since $ H_N =: i K_N $ with $ K_N $ real and skew symmetric, the spectrum of $ H_N $ is symmetric about the origin, i.e.,  $\sigma(H_N) = \{ \pm \lambda_1, \pm \lambda_2,  \dots , \pm \lambda_N\}  $ with $ 0\leq \lambda_1 \leq \lambda_2 \leq \dots \leq \lambda_N $ denoting its non-negative eigenvalues. 
\item The unitary transformation $ u := \frac{1}{\sqrt{2}} \begin{pmatrix} 1 & 1 \\ i & -i \end{pmatrix}  $ rotates the spin matrices, $ u^* \sigma^1 u = \sigma^2 $ and $ u^* \sigma^2 u = \sigma^3 $. Through $ U :=\bigoplus_{j=1}^N u $ one may lift this rotation to a local transformation on $ \ell^2(\{1,\dots, N\} ; \mathbb{C}^2)  \simeq  \bigoplus_{j=1}^N  \mathbb{C}^2 $. Under this transformation, the Hamiltonian~\eqref{blockmatrix} turns into a Jacobi matrix block matrix $ U^* H_N U $, in which the variables $ \{ \nu_\xi \} $ are on the diagonal.\\
Performing another change of variables, in which we permute the indices in $ U^* H_N U $, the Hamiltonian is seen to be unitarily equivalent (denoted here by $ \simeq $) to the block matrix
\begin{equation} \label{eq:blockmatrix}
H_N \simeq \begin{pmatrix} -A & - B \\ B & A \end{pmatrix} \quad \mbox{with} \quad A = \begin{pmatrix} - \nu_1 & \mu_1 &  &  \\
				\mu_1 &  \ddots & \ddots &  \\
				&  \ddots & \ddots & \mu_{N-1}\\
				& &\mu_{N-1}  & - \nu_N 
				\end{pmatrix}
\end{equation}
and
\begin{equation}
B =\begin{pmatrix}  0 & \gamma_1\mu_1 &  &  \\
				- \gamma_1\mu_1 &  \ddots & \ddots &  \\
				&  \ddots & \ddots & \gamma_{N-1}\mu_{N-1}\\
				& &- \gamma_{N-1}\mu_{N-1}  & 0
				\end{pmatrix}
\end{equation}
In the isotropic case, i.e. $ \gamma_\xi = 0 $, and if the spin coupling is homogeneous, i.e. $ \mu_\xi = \mu $ for all $ \xi $, the Hamiltonian thus reduces to (two copies of) a discrete 
Schr\"odinger operator on $ \ell^2(\{1,\dots, N\}) $ with hopping $ \mu $ and potential given by $ \{ \nu_\xi \} $. 
\end{enumerate}

To diagonalize $S_N$, we make a Bogoliubov transformation. More precisely, let $O $  be the real orthogonal $ 2N\times 2N $ matrix which brings the skew-symmetric matrix $ K_N $ into its canonical block form, 
\begin{equation}
O K_N O^T= \Lambda = \bigoplus_{j=1}^N \Lambda_j \, , \quad \mbox{where} \quad \Lambda_j := \begin{pmatrix} 0 & \lambda_j \\ - \lambda_j & 0 \end{pmatrix} \, . 
\end{equation}
Regarding this as a change of variables and recalling the quadratic form relation~\eqref{eq:Crep}, it is natural to define
\begin{equation} \label{defbop}
\mathcal{B} := O \mathcal{A} \quad \mbox{and then label} \quad \mathcal{B} = (b_1^+, b_1^- , \cdots, b_{N}^+, b_N^-)^T \, ,
\end{equation}
in analogy to $\mathcal{A}$. The Hamiltonian is then in its canonical form in terms of these $b$-operators:
\begin{equation} \label{diagham}
S_N = \frac{ 1}{2} \,  \mathcal{A}^T H_N \mathcal{A } =   \sum_{j=1}^N \lambda_j \,  i b_j^+ b_j^- \, . 
\end{equation}
Let us summarize some basic properties of these operators:
\begin{enumerate}
\item Since $ O $ is a real orthogonal matrix, the  algebra of Majorana fermions is preserved under this transformation, i.e., $ (b_j^\#)^* = b_j^\# $, $ (b_j^\#)^2 = 1 $ and $ \{ b_j^\#, b_k^\flat \} = 2 \delta_{j,k} \delta_{\#,\flat} $ for both $ \#,\flat \in \{\pm\} $ and all $ j, k \in \{ 1, \dots , N \} $. 
\item For $ j \in \{ 1,\dots ,N\}$, the operators $ i b_j^+ b_j^-  $ are self-adjoint and pairwise commute. Since $ (i b_j^+ b_j^- )^2 = \id $ their eigenvalues are $ \pm 1 $. They measure the individual fermion parity. More precisely, the Fermi creation and annihilation operators 
\begin{equation}
 \psi_j^* :=  \frac{1}{2} \left( b_j^+ - i b_j^- \right) \, , \quad   \psi_j :=  \frac{1}{2} \left( b_j^+ + i b_j^- \right) \, ,
\end{equation}
corresponding to these Majorana modes, satisfy 
\begin{equation}
2 \psi_j^* \psi_j - \idty = i b_j^+ b_j^- \, .
\end{equation} 
The Hamitonian $S_N$ commutes, $ [P_N,S_N] = 0 $, with the total fermion parity
\begin{equation}
P_N := i b_N^+ b_N^-  \cdots  i b_1^+ b_1^- \, .
\end{equation}
Since the latter is self-adjoint $ P_N^* = P_N  $ and satisfies $ P_N^2 = \id $, it also has eigenvalues $ \pm 1 $. 
The orthogonal transformation~\eqref{defbop} preserves the fermion parity operator, i.e. 
\begin{equation}\label{eq:Fermionparitychange}
P_N = \det O \cdot  i a_N^+ a_N^-  \cdots  i a_1^+ a_1^- = \det O \cdot \sigma_N^3 \dots \sigma_1^3 \,  .
\end{equation}
(This follows most easily by restricting wlog to the case $ \det O = 1 $, for which the orthogonal transformation can be implemented on the Hilbert space by a unitary dynamics generated by a quadratic Hamiltonian which commutes with $ P_N $, cf.~\cite{B04}.)
\end{enumerate}
Since the spin Hamiltonian $S_N$ is quadratic and diagonal in $ i b_j^+ b_j^- $, 
a number of important consequences follow:
\begin{enumerate}
\item 
The spectrum of $ S_N $ can be completely described in terms of the joint eigenstates of the collection of the operators $ i b_j^+ b_j^-  $. 
To do so, we start from the unique normalized vector $\Omega \in \mathcal{H}_N$
defined by $ \psi_j \Omega =0$ for all $1 \leq j \leq N$.  Next, for $\alpha = (\alpha_1, \alpha_2, \cdots, \alpha_N) \in \{0,1\}^{N}$, the vectors 
\begin{equation} \label{evecs}
\Psi_{\alpha} = (\psi_1^*)^{\alpha_1} \cdots (\psi_N^*)^{\alpha_N} \Omega 
\end{equation}
form an orthonormal basis of $\mathcal{H}_N$. In fact, they are also eigenvectors of $S_N$: 
\begin{equation} \label{evals}
S_N \Psi_{\alpha} = E_{\alpha} \Psi_{\alpha} \quad \mbox{with} \quad E_{\alpha} =  2 \sum_{j : \alpha_j=1} \lambda_j -E  
\end{equation}
where  $ E = \sum_{j=1}^N \lambda_j $ stands for the negative ground-state energy.
The fermion parity of these eigenstates is
\begin{equation}\label{eq:Fermparityeigen}
P_N \Psi_\alpha = (-1)^{\sum_{j=1}^N \alpha_j + N} \Psi_\alpha 
\end{equation}
for all $ \alpha  \in \{0,1\}^{N} $. The ground-state is unique and given by $ \Psi_{(0,\dots, 0)} $ if and only if $ H_N $ has a trivial kernel.
\item The time evolution is trivial on the $ b $-operators, i.e.
\begin{equation}\label{eq:timeevbops}
\begin{pmatrix}
b_j^+(t) \\ b_j^-(t) 
\end{pmatrix} 
:=  \begin{pmatrix}
\tau_t(b_j^+) \\ \tau_t(b_j^-) 
\end{pmatrix} 
= e^{2 t \Lambda_j} \begin{pmatrix}
b_j^+ \\ b_j^-
\end{pmatrix}   \, . 
\end{equation}
Given this, by setting $\mathcal{A}(t) := \tau_t( \mathcal{A})$, understood component-wise as above, one finds
\begin{equation} \label{aevol}
\mathcal{A}(t) = \tau_t( O^T \mathcal{B}) = O^T e^{2t \Lambda} O \mathcal{A} = e^{-2itH_N} \mathcal{A}  \, .
\end{equation}
\item
The quadratic nature of $ S_N $ implies that any induced thermal state, i.e. $ \rho = e^{- \beta S_N} / \tr  e^{- \beta S_N}  $ with $ \beta \in (0,\infty) $, or any eigenstate of $ S_N $ is quasi-free (cf.~\cite{B04}). The same applies to the functionals which result from these through decorations by the fermion parity operator, in particular, 
\begin{equation}\label{eq:twistedcorr}
\llangle \cdot \rrangle := \tr\left( (\cdot) P_N \rho\right) \big/ \tr\left( P_N \rho\right)
\end{equation} 
assuming $ \tr\left( P_N \rho\right)  \neq 0 $. 
\end{enumerate}

The last observation will be essential in calculating the correlation functions~\eqref{eq:corrfct}.

\subsection{Correlation functions} 
Using the Jordan-Wigner transformation, the spin correlations functions~\eqref{eq:corrfct} can be explicitly expressed in terms of correlation functions involving the $ a $-operators.

\subsubsection{Reduction to Majorana correlations}
As a warm-up, let us first consider all single spin correlations and all those involving $ \sigma^3 $. Recall that we have set
$\mathcal{A}(t) = \tau_t(\mathcal{A}) =: (a_1^+(t), a_1^-(t), \cdots, a_N^+(t), a_N^-(t))^T$ the latter a notation we will use below.
\begin{lemma}\label{lem:3}
Let $\rho$ be a quasi-free state and assume $ [\rho, S_N] = 0 $. One has that
\begin{equation} \label{singlecor}
 \langle \sigma_\xi^1 \rangle  =  \langle \sigma_\xi^2 \rangle = 0 \, , \quad 
  \langle \sigma_\xi^3 \rangle  =  i \langle a_\xi^+a_\xi^-  \rangle  \,  ,
\end{equation}
and 
\begin{align}\label{eq:corrsigma3}
& \langle \tau_t( \sigma_\xi^3) \sigma_\eta^1 \rangle = \langle \tau_t( \sigma_\xi^3) \sigma_\eta^2 \rangle  = 0 \, ,  \notag \\
& \langle \tau_t( \sigma_\xi^3) \sigma_\eta^3 \rangle  -  \langle  \sigma_\xi^3 \rangle \langle \sigma_\eta^3 \rangle  =  \langle a_\xi^+(t)a_\eta^+ \rangle \cdot \langle a_\xi^-(t)a_\eta^- \rangle -  \langle a_\xi^+(t)a_\eta^- \rangle \cdot \langle a_\xi^-(t)a_\eta^+ \rangle \, ,
\end{align}\
for any $1 \leq  \xi, \eta \leq N $ and any $ t \in \mathbb{R} $.
\end{lemma}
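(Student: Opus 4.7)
The strategy is to translate all spin operators to Majorana operators via the Jordan--Wigner transformation of Section~\ref{ss:JW} and then read off the claims from the quasi-free (Wick) structure of $\rho$. Inverting~\eqref{cdef} using $(\sigma_j^3)^2 = \idty$ and $\sigma_j^3 = i a_j^+ a_j^-$ yields
\begin{equation*}
\sigma_\xi^1 = i^{\xi-1}\, (a_1^+ a_1^-)\cdots(a_{\xi-1}^+ a_{\xi-1}^-)\, a_\xi^+, \qquad \sigma_\xi^2 = -i^{\xi-1}\, (a_1^+ a_1^-)\cdots(a_{\xi-1}^+ a_{\xi-1}^-)\, a_\xi^-,
\end{equation*}
so $\sigma_\xi^1$ and $\sigma_\xi^2$ are odd (degree $2\xi-1$) monomials in the Majorana generators, while $\sigma_\xi^3 = i a_\xi^+ a_\xi^-$ is of even degree~$2$.

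The quasi-free property of $\rho$, via the vanishing clause of the definition~\eqref{def:quasifree} applied to Majorana strings (which are of the required $c(f)+c^*(g)$ type), forces every $\langle\cdot\rangle$-expectation of an odd-degree Majorana polynomial to vanish. This immediately gives $\langle \sigma_\xi^1 \rangle = \langle \sigma_\xi^2 \rangle = 0$, while $\langle \sigma_\xi^3 \rangle = i \langle a_\xi^+ a_\xi^- \rangle$ is a direct substitution. Since $\tau_t(\sigma_\xi^3) = i a_\xi^+(t) a_\xi^-(t)$ remains a degree-$2$ Majorana monomial by~\eqref{aevol}, the products $\tau_t(\sigma_\xi^3)\, \sigma_\eta^1$ and $\tau_t(\sigma_\xi^3)\, \sigma_\eta^2$ are of odd total degree $2\eta+1$, and the same parity argument yields the two vanishing statements in~\eqref{eq:corrsigma3}.

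For the remaining identity, I would expand $\tau_t(\sigma_\xi^3)\, \sigma_\eta^3 = - a_\xi^+(t) a_\xi^-(t) a_\eta^+ a_\eta^-$ and apply the Wick formula~\eqref{def:quasifree} to this four-point function. With $(C_1,C_2,C_3,C_4) := (a_\xi^+(t), a_\xi^-(t), a_\eta^+, a_\eta^-)$ the three admissible pairings give
\begin{equation*}
\langle C_1 C_2 C_3 C_4 \rangle = \langle C_1 C_2\rangle \langle C_3 C_4\rangle - \langle C_1 C_3\rangle \langle C_2 C_4\rangle + \langle C_1 C_4\rangle \langle C_2 C_3\rangle.
\end{equation*}
The hypothesis $[\rho,S_N]=0$ makes $\langle\cdot\rangle$ invariant under $\tau_t$, so $\langle a_\xi^+(t) a_\xi^-(t)\rangle = \langle a_\xi^+ a_\xi^-\rangle$. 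The disconnected pairing then contributes $-\langle a_\xi^+ a_\xi^-\rangle \langle a_\eta^+ a_\eta^-\rangle$, which cancels exactly with $\langle \sigma_\xi^3 \rangle \langle \sigma_\eta^3 \rangle = i^2 \langle a_\xi^+ a_\xi^-\rangle \langle a_\eta^+ a_\eta^-\rangle$, while the two crossed pairings combine with the overall sign $-1$ to yield precisely the right-hand side of~\eqref{eq:corrsigma3}.

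The proof is thus essentially a parity argument together with a single invocation of Wick's formula; the only delicate point is the sign bookkeeping, in particular the two factors of $i$ in $\sigma^3 = i a^+ a^-$ and the alternating sign in~\eqref{def:quasifree}, both of which must be tracked in order to see the cancellation of the disconnected term.
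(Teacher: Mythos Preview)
Your proof is correct and follows essentially the same route as the paper: translate the spin operators to Majorana strings via Jordan--Wigner, kill the odd-degree expectations by the quasi-free parity clause, and for the $\sigma^3$--$\sigma^3$ correlation expand the four-point function by Wick's rule and use $[\rho,S_N]=0$ to cancel the disconnected pairing. One cosmetic point: after~\eqref{aevol} the operators $a_\xi^\pm(t)$ are linear combinations of the static Majoranas, so $\tau_t(\sigma_\xi^3)$ is a degree-$2$ \emph{polynomial} rather than a monomial, but this does not affect the parity argument or the applicability of~\eqref{def:quasifree}.
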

\begin{proof}
Inserting the Jordan-Wigner relation~\eqref{cdef} we  obtain for any $\xi$:
\begin{equation} \label{0exp}
 \langle \sigma_{\xi}^1 \rangle  = i^{\xi -1} \left\langle \left( \prod_{\ell = 1}^{\xi -1} a_{\ell}^+ a_{\ell}^- \right)  a_{\xi}^+  \right\rangle =0 
\end{equation}
the final equality follows as these states are quasi-free. In fact, the expectation of the product is then a pfaffian, and there
are an odd number of $a$-operators. The result for $\sigma_{\xi}^2$ is similar. The third identity in~\eqref{singlecor} immediately follows from $ \sigma_\xi^3 = i a_\xi^+a_\xi^- $.

For a derivation of~\eqref{eq:corrsigma3} we proceed similarly using~\eqref{cdef} and the fact that the state $ \rho $ is quasi-free:
\begin{equation} \label{0exp2}
\langle \tau_t(\sigma_{\xi}^3) \sigma_{\eta}^1 \rangle  = i^{\eta} \left\langle a_{\xi}^+(t) a_{\xi}^-(t)  \left( \prod_{\ell = 1}^{\eta -1} a_{\ell}^+ a_{\ell}^- \right) a_{\eta}^{+}  \right\rangle =0 \, ,
\end{equation}
since the number of $a$-operators is odd. 
The result for $\sigma_{\eta}^2$ is again argued similarly.

To evaluate the remaining correlation, observe that
\begin{align}
\langle \tau_t( \sigma_\xi^3) \sigma_\eta^3 \rangle  -  \langle  \sigma_\xi^3 \rangle \langle \sigma_\eta^3 \rangle & =  -   \langle a_\xi^+(t) a_\xi^-(t)a_\eta^+a_\eta^- \rangle +  \langle a_\xi^+ a_\xi^- \rangle \cdot \langle a_\eta^+ a_\eta^- \rangle \notag \\
& =   \langle a_\xi^+(t)a_\eta^+ \rangle \cdot \langle a_\xi^-(t)a_\eta^- \rangle -  \langle a_\xi^+(t)a_\eta^- \rangle \cdot \langle a_\xi^-(t)a_\eta^+ \rangle
\end{align}
where the last equality again follows from the fact that $ \rho $ is quasi-free. In this case, the four-point function $ \langle a_\xi^+(t) a_\xi^-(t)a_\eta^+a_\eta^- \rangle  $ reduces to 
a simple pfaffian which can be evaluated e.g. according to Wick's rule~\eqref{def:quasifree}. Moreover, by time invariance we have
$ \langle a_\xi^+(t)a_\xi^-(t) \rangle= \langle a_\xi^+a_\xi^- \rangle$. 
\end{proof}

The $ \sigma^3 $-correlation~\eqref{eq:corrsigma3} is readily seen to decay in the distance $ |\xi-\eta| $ whenever the two-point functions involving the $a $-operators 
are known to do so. To establish a similar result for the correlations in the $12$-plane, we again start from the Jordan-Wigner transformation~\eqref{cdef} and write for $ w, w\in \{ 1,2\} $: 
\begin{equation}\label{eq:12cor1}
\langle \tau_t( \sigma_\xi^w) \, \sigma_\eta^{w'} \rangle =  (-1)^{w+w'}  \langle \left( \prod_{\ell=1}^{\xi-1} i a_\ell^+(t) a_\ell^-(t) \right) a^{\#_w}_\xi(t) a^{\#_{w'}}_\eta \left( \prod_{m=1}^{\eta-1} i a_\ell^+ a_\ell^- \right) \rangle \, , 
\end{equation}
where we introduced the abbreviation:
\begin{equation}
  \#_w := \begin{cases} + & w=1 \\ - & w= 2 \end{cases} \, . 
 \end{equation}
The above average does not quite fit our needs if one aims to apply Theorem~\ref{thm:pf} or  \ref{thm:majodis}. We therefore rewrite the product
$$  i  a^+_{1} a^-_{1} \cdots i a^+_{\eta-1} a^-_{\eta-1}  = (\det O) \cdot  i a^+_{\eta} a^-_{\eta} \dots i a^+_{N} a^-_{N} \, P_N \, , 
$$
using the identity~\eqref{eq:Fermionparitychange} for the fermion parity. This brings the  twisted average~\eqref{eq:twistedcorr} into the equation: 
\begin{align} \label{eq:12cor}
 \langle \tau_t( \sigma_\xi^w) \, \sigma_\eta^{w'} \rangle 
 = & (-1)^{w+w'}  \;  i^{\xi+ N-\eta} (\det O) \left(\tr P_N \rho\right) (-1)^{w'-1} \\
& \mkern10mu \times \llangle  a^+_{1}(t) a^-_{1}(t) \cdots a^+_{\xi-1}(t) a^-_{\xi-1}(t) \ a_\xi^{\#_w}(t) a_\eta^{\flat_{w'}} a^+_{\eta+1} a^-_{\eta +1} \dots a^+_{N} a^-_{N}  \rrangle \, ,   \notag 
\end{align}
for all $1 \leq  \xi \leq \eta \leq N $, any $ t \in \mathbb{R} $, and $ w, w' \in \{ 1,2\} $. Here we have set $\flat_{w'} = - \#_{w'}$.
In case~\eqref{eq:twistedcorr} defines a quasi-free functional, the last expression ($ \llangle \dots \rrangle $) is the pfaffian
\begin{equation}\label{eq:XYPfaffian}
  \pf\left(\llangle   a_{x_j}^{\#_j}(t_j)  a_{x_k}^{\#_k}(t_k) \rrangle\right)_{1\leq j < k \leq 2n }  
\end{equation}
where $  x = (1,1, 2,2 , \dots ,\xi-1,\xi-1, \xi , \eta , \eta +1 , \eta +1, \dots ,N, N ) $, the corresponding vector of signs is $ (+,-, \dots, +,- , \#_w, \flat_{w'} , +, - , \dots , +, -) $, and  $ n = N + \xi - \eta $. 
This is precisely the setting of Theorem~\ref{thm:pf} or \ref{thm:majodis} with distance given by
\begin{equation} \label{configdist}
r(x) = \max_{j } | x_{2j-1} - x_{2j} | = | \xi - \eta | \, .
\end{equation}
In order to apply these theorems, it remains to determine two-point functions associated with the $ a $-operators.

\subsubsection{Calculating Majorana correlations}
 All the relevant information concerning the spin correlations of interest is encoded in the following $2N\times 2N $ matrix:
\begin{align}
\Gamma^{\mathcal{A}}(t,s) & :=   \langle \mathcal{A}(t) \mathcal{A}(s)^T \rangle  \, , \quad t,s \in \mathbb{R} \, .
\end{align}
If $ \rho $ commutes with $ S_N $, then this correlation matrix only depends
on the time difference, 
$ \Gamma^{\mathcal{A}}(t,s) = \Gamma^{\mathcal{A}}(t-s,0) $.
Using (\ref{aevol}), it is clear that
\begin{equation} \label{athrub}
\Gamma^{\mathcal{A}}(t,0) = \langle e^{-2itH_N} \mathcal{A}(0) \mathcal{A}(0)^T \rangle = e^{-2itH_N} O \Gamma^{\mathcal{B}}(0,0) O^T
\end{equation}
where we have similarly set $\Gamma^{\mathcal{B}}(t,s) :=   \langle \mathcal{B}(t) \mathcal{B}(s)^T \rangle$. We need only
determine the static $b$-correlations and this is the content of the following 
\begin{lemma}\label{lem:Bcormat}
Assume either
\begin{description}
\item[Case 1:] $ \rho = e^{-\beta S_N} / \tr e^{-\beta S_N} $ with $ \beta >0 $, or
\item[Case 2:] $ H_N $ has simple spectrum and $ \rho = | \Psi_\alpha \rangle \langle \Psi_\alpha | $ with $ \alpha \in \{0, 1\}^N $, or
\item[Case 3:] $ H_N $ has a trivial kernel and $ \rho = P_N e^{-\beta S_N} / \tr P_N e^{-\beta S_N} $ with $ \beta >0 $.
\end{description}
Then for any $ t, s \in \mathbb{R} $:
\begin{equation} \label{Bcormat}
\Gamma^{\mathcal{A}}(t,s) = e^{-2i(t-s) H_N} \,   f_\rho(H_N)   \, ,
\end{equation}
where $f_{\rho}: \R \to  \mathbb{R} $ is the function given by:
\begin{description}
\item[Case 1:]  $ f_\rho(\lambda) = 2 ( 1+ e^{-2\beta \lambda} )^{-1}  $.
\item[Case 2:]  $ f_\rho(\lambda)  = 2 \chi_{\Delta_\alpha}(\lambda) $\\[1ex] with $ \chi_{\Delta_\alpha} $ the characteristic function onto the set
$\Delta_{\alpha} = \{ \lambda_j | \alpha_j =0 \} \cup \{ - \lambda_j | \alpha_j = 1 \} $. 
\item[Case 3:]  $ f_\rho(\lambda) = 2 ( 1- e^{2\beta \lambda} )^{-1}   $.
\end{description}
\end{lemma}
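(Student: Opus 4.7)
The plan is to reduce the statement to a mode-by-mode computation in the diagonal Bogoliubov basis and then reassemble the answer through the functional calculus for $H_N$. Because $\rho$ commutes with $S_N$ in each of the three cases, the functional $\langle\,\cdot\,\rangle$ is time-translation invariant; together with (\ref{aevol}) this gives $\Gamma^{\mathcal{A}}(t,s) = e^{-2i(t-s)H_N}\,\Gamma^{\mathcal{A}}(0,0)$, so by (\ref{athrub}) the claim reduces to the block-by-block identity $\Gamma^{\mathcal{B}}(0,0) = f_\rho(O H_N O^T) = \bigoplus_j f_\rho(i\Lambda_j)$.

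The first step is to show that the off-diagonal ($j\neq k$) blocks of $\Gamma^{\mathcal{B}}(0,0)$ vanish. In Cases~1 and~3 the single-mode parity $e^{i\pi N_k} = i b_k^+ b_k^-$ commutes with both $S_N$ and $P_N$, and conjugation by it reverses the sign of any product $b_j^{\#} b_k^{\flat}$ with $j\neq k$, so by cyclicity in the trace the expectation equals its own negative. In Case~2 the same vanishing is immediate because $\Psi_\alpha$ is a joint eigenvector of all the $N_k$. For the diagonal blocks $M_j$, the identities $(b_j^{\pm})^2 = \idty$ and $\{b_j^+, b_j^-\} = 0$ leave only the scalar $\langle i b_j^+ b_j^- \rangle = \langle 2 N_j - 1\rangle$ to be computed. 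Factorizing $e^{-\beta S_N} = e^{\beta E}\prod_k e^{-2\beta\lambda_k N_k}$ yields $-\tanh(\beta\lambda_j)$ in Case~1; the eigenvector identity $N_j \Psi_\alpha = \alpha_j \Psi_\alpha$ yields $2\alpha_j - 1$ in Case~2; and inserting $P_N = \prod_k(2N_k - 1)$ into the same factorization, together with $(2N_j -1)^2 = \idty$, yields $-\coth(\beta\lambda_j)$ in Case~3. Each $M_j$ therefore has the form $\idty + c_j\,\sigma^2$ with a definite scalar $c_j$. Using the elementary rule $g(\alpha \sigma^2) = \tfrac{1}{2}(g(\alpha) + g(-\alpha))\,\idty + \tfrac{1}{2}(g(\alpha) - g(-\alpha))\,\sigma^2$ valid for any scalar function $g$, one pins down the required values of $f_\rho$ at $\pm\lambda_j$ so that $M_j = f_\rho(i\Lambda_j)$; a brief check reproduces the Fermi function, the characteristic function $2\chi_{\Delta_\alpha}$, and the signed function listed in Case~3. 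Summing over $j$ and undoing the $O$-conjugation then gives $\Gamma^{\mathcal{A}}(0,0) = f_\rho(H_N)$.

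The delicate step will be Case~3. There $\rho$ is a signed functional rather than a density matrix (the normalization $\tr(P_N e^{-\beta S_N})$ has sign $(-1)^N$), and the trivial-kernel hypothesis on $H_N$ is exactly what guarantees that the mode-wise denominators $e^{-2\beta\lambda_k}-1$ are nonzero, so that the factorization makes sense in the first place. One also has to check that the parity-flip argument used to kill the off-diagonal blocks still goes through after inserting $P_N$ into the weight, which it does because $e^{i\pi N_k}$ commutes with $P_N$. Apart from that, the only care needed is bookkeeping of signs between $\Lambda_j$ and $\sigma^2$, which is automatic once one matches even and odd parts of $f_\rho$ at $\pm\lambda_j$.
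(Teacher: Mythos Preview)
Your proposal is correct and follows essentially the same route as the paper's own proof: reduce via \eqref{athrub} to computing $\langle \mathcal{B}\mathcal{B}^T\rangle_\rho = f_\rho(i\Lambda)$, note that the off-diagonal $(j\neq k)$ blocks vanish, and then compute $i\langle b_j^+ b_j^-\rangle_\rho$ mode by mode in each of the three cases. The only difference is one of detail: you supply an explicit parity-conjugation argument for the vanishing of the off-diagonal blocks (which the paper simply asserts as ``one easily checks''), and you make the passage from the scalar $\langle i b_j^+ b_j^-\rangle$ to $f_\rho(i\Lambda_j)$ explicit via the even/odd decomposition of $g(\alpha\sigma^2)$, whereas the paper writes the same identification directly in matrix form.
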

\begin{proof}
Given (\ref{athrub}) and the fact that $ K_N = O^T \Lambda O $, the claim is equivalent to showing that
\begin{equation}
\langle \mathcal{B} \mathcal{B}^T \rangle_{\rho} = \,  f_\rho(i \Lambda)  \, .
\end{equation}
One easily checks that, in the cases considered above, the off-diagonal expectations are zero, i.e., $ \langle b_j^\# b_k^\flat \rangle_{\rho} = 0 $ for $ j \neq k $, and so
\begin{equation}
\langle \mathcal{B} \mathcal{B}^T \rangle_{\rho} = \bigoplus_{j=1}^N  \big\langle  \begin{pmatrix} b_j^+ \\ b_j^- \end{pmatrix}  \begin{pmatrix} b_j^+ & b_j^- \end{pmatrix} \big\rangle_{\rho} 
 = \bigoplus_{j=1}^N \left[ \id -i  \langle b_j^+ b_j^- \rangle_{\rho} \begin{pmatrix} 0 & i \\ -i & 0 \end{pmatrix}  \right] \, .
\end{equation}
Here the last line results from explicit matrix multiplication using $ (b_j^\# )^2 = 1 $ and $ b_j^- b_j^+ = - b_j^+b_j^- $. It thus remains to calculate $ i  \langle b_j^+ b_j^- \rangle_\rho $ in the cases mentioned above.

\begin{enumerate}
\item 
In case of a thermal state,  $ \rho = e^{-\beta S_N} / \tr e^{-\beta S_N} $, we have 
\begin{align}
i \tr \left( b_j^+ b_j^-  e^{-\beta S_N} \right) & = \left( \sum_{n_j \in \{0,1\}} e^{-\beta \lambda_j (2n_j-1)} (2n_j-1) \right)\prod_{k\neq j}^N \left( \sum_{n_k \in \{0,1\}} e^{-\beta \lambda_k (2n_k-1)} \right) \notag \\
& =  - \tanh(\beta \lambda_j) \, \tr  e^{-\beta S_N}  \, . 
 \end{align} 
This implies $  -i \langle b_j^+ b_j^- \rangle_\rho \begin{pmatrix} 0 & i \\ -i & 0 \end{pmatrix}= \tanh(i \beta \Lambda_j) $.
\item
In case of an eigenstate,  $ \rho = | \Psi_\alpha \rangle \langle \Psi_\alpha | $, we have
\begin{equation}
i \langle b_j^+ b_j^- \rangle_{\rho} = \langle \Psi_\alpha , (2 \psi_j^* \psi_j - 1) \Psi_\alpha \rangle = \, 2 \alpha_j - 1  \, . 
\end{equation}
Hence $ i \langle b_j^+ b_j^- \rangle_\rho \begin{pmatrix} 0 & i \\ -i & 0 \end{pmatrix} = (2 \alpha_j - 1) \sgn(i \Lambda_j) $. Here that we require $ \lambda_j \neq 0 $ for all $ j \in \{1, \dots , N \} $, which due to the symmetry of the spectrum of $ H_N $ is implied by the simplicity of the eigenvalues. From this the claim follows by distinguishing the cases $ \alpha_j \in \{ 0, 1\} $. Note that the fact that $ H_N $ has simple spectrum implies that $ \chi_\alpha(H_N) $ is well-defined.
\item
In case $ \rho =P_N  e^{-\beta S_N} / \tr P_N e^{-\beta S_N} $, the calculation proceeds similarly to the first case. Since $ \tr P_N e^{-\beta S_N} = (-2)^N \prod_{j=1}^N \sinh(\beta \lambda_j) $, we again need the assumption that $ H_N $ has a non-trivial kernel. 
\end{enumerate}
\end{proof}

Before turning to our main result, let us conclude this section with some historical remarks.
Lemma~\ref{lem:3} and \ref{lem:Bcormat} together with \eqref{eq:12cor1} or \eqref{eq:12cor} yields a general expression for the time-dependent spin correlation functions of the $ XY $ model in terms of pfaffians involving the single-particle Hamiltonian $ H_N $ entering
\begin{equation}\label{eq:singlecor}
\langle a_\xi^\#(t)a_\eta^\flat(s) \rangle = \langle \delta_\xi^\# , e^{-2i (t-s) H_N} f_\rho(H_N) \delta_\xi^\flat \rangle \, ,
\end{equation} 
where $ \{ \delta_\xi^\# \} $ denotes the canonical orthonormal basis in $ \ell^2(\{ 1, \dots , N \};\mathbb{C}^2 ) $. 
The fact that spin correlations in the $ XY $-chain are expressible in terms of such pfaffians (or determinants) is an observation which dates back to the seminal paper \cite{LM} for the time-independent case. In the homogeneous case ($\mu_\xi = \mu $) these explicit expressions are used to show that the ground-state correlations in the $12 $-direction exhibit an algebraic fall-off -- a fact which should be contrasted to the exponential decay~\ref{eq:explocthm} below in the presence of an additional random field $ \{\nu_\xi \} $.

Explicit expressions for the time-dependent correlation functions go back to \cite{Nie} (for the $ 3 $-direction) and \cite{CBA,BJ76} (for the $ 12 $-direction). They have been 
the starting point for numerous further studies (see, e.g.~\cite{SNM}).

\subsection{Main result: dynamical localization} 
Our main result in this section concerns the case in which the one-particle Hamiltonian $ H_N $ is random and can be proven to exhibit strong-dynamical localization (cf.~\eqref{eq:expdecaysl}). A standard example of a random version of $ H_N $ is the case that the spin coupling parameters $ \{ \mu_\xi \} $ and $ \{ \gamma_\xi \} $ are constant and the external magnetic field $ \{ \nu_\xi \} $ forms iid random variables. We will discuss the applicability of the following general theorem in this case below.
\begin{theorem}[Strong dynamical localization in spin chain]\label{thm:XY0}
Suppose that the single-particle Hamiltonian $ H_N $ associated with the spin-chain $ S_N $ is a random operator on $ \ell^2(\{1,\dots,N\};\mathbb{C}^2) $ which for all $ N \in \mathbb{N} $ satisfies:
\begin{enumerate}
\item $ H_N $ has almost-surely simple spectrum.
\item the eigenfunction correlator of $H_N$ exhibits complete strong dynamical localization in the sense that for all $ \xi,\eta \in \{1,\dots , N\} $:
\begin{equation}\label{eq:complloc0}
\sup_{\#,\flat \in \{\pm\} } \mathbb{E}\left[ \sup_{\substack{f \in L^\infty(\mathbb{R}) \\ \| f \|_\infty \leq 1 }}  \left| \langle \delta_{\xi}^\#  , f(H_N)  \, \delta_{\eta}^\flat \rangle \right| \right] \leq C \, e^{-\mu |\xi - \eta | }
\end{equation}
with some $ N $ independent constants $ C, \mu \in (0,\infty) $.
\end{enumerate} 
Then the time-dependent  spin correlations associated to either of the states corresponding to \eqref{def:state} also exhibit strong dynamical localization in the sense that there are some $ C', \mu' \in (0,\infty) $ for which, given any $ N \in \mathbb{N} $:
 \begin{equation}
\mathbb{E} \left[ \sup_{t \in \mathbb{R}} \left| \langle \tau_t(\sigma_\xi^w) \sigma_\eta^{w'} \rangle - \langle \sigma_\xi^w \rangle \langle \sigma_\eta^{w'} \rangle \right| \right] \leq C' \, e^{-\mu' |\xi-\eta|} \, . 
\end{equation}
for all $ w,w' \in \{1,2,3\} $ and all $ \xi, \eta \in \{ 1,\dots , N \} $.
\end{theorem}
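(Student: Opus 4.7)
The plan is to handle the nine correlations $\langle\tau_t(\sigma_\xi^w)\sigma_\eta^{w'}\rangle-\langle\sigma_\xi^w\rangle\langle\sigma_\eta^{w'}\rangle$ in three groups determined by $w, w' \in \{1,2,3\}$: the mixed $\sigma^3$-$\sigma^{1,2}$ cases (which vanish identically by Lemma \ref{lem:3}), the $\sigma^3$-$\sigma^3$ case (reduced via Lemma \ref{lem:3} to products of Majorana two-point functions), and the four $12$-plane cases with $w, w' \in \{1,2\}$ (handled via the Pfaffian formula \eqref{eq:12cor} combined with Theorem \ref{thm:majodis}).

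For the $\sigma^3$-$\sigma^3$ connected correlation, equation \eqref{eq:corrsigma3} expresses it as a difference of products of two two-point functions. By \eqref{eq:singlecor} and Lemma \ref{lem:Bcormat}, each such two-point function takes the form $\langle\delta_\xi^\#, e^{-2itH_N}f_\rho(H_N)\delta_\eta^\flat\rangle$, with $\|f_\rho\|_\infty\leq 2$ in both Cases 1 and 2. Hypothesis \eqref{eq:complloc0} then yields $\mathbb{E}[\sup_t|\langle a_\xi^\#(t)a_\eta^\flat\rangle|]\leq 2Ce^{-\mu|\xi-\eta|}$. Bounding one factor of each product pointwise by $2$ and using exponential decay for the other gives the claimed exponential decay without loss in the rate.

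For the $12$-plane correlations, formula \eqref{eq:12cor} expresses the correlation as $(\det O)(\tr P_N\rho)$ times a Pfaffian of the twisted two-point functions $\llangle a^{\#_j}_{x_j}(t_j)a^{\#_k}_{x_k}(t_k)\rrangle$ on a Majorana configuration with $r(x) = |\xi-\eta|$, see \eqref{configdist}. The twisted functional $\llangle\cdot\rrangle = \tr((\cdot)P_N\rho)/\tr(P_N\rho)$ is quasi-free and bounded, $|\llangle A\rrangle| \leq M_0\|A\|$ with $M_0 := \|P_N\rho\|_1/|\tr(P_N\rho)|$; crucially $|\tr(P_N\rho)|\cdot M_0 = \|P_N\rho\|_1\leq 1$, so multiplication by $|\tr(P_N\rho)|$ turns the estimate of Theorem \ref{thm:majodis} into the desired bound. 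It thus remains to verify the exponential decay hypothesis \eqref{ass:majdis} for the twisted two-point function. In the eigenstate case, \eqref{eq:Fermparityeigen} gives $P_N\rho = \pm\rho$, so $\llangle\cdot\rrangle = \langle\cdot\rangle$ and the decay follows from the previous paragraph.

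The main obstacle is the thermal case, where by Case 3 of Lemma \ref{lem:Bcormat} the twisted two-point function is governed by the symbol $\tilde f_\rho(\lambda) = 2/(1-e^{2\beta\lambda})$, which is singular at $\lambda = 0$. My plan is to introduce a spectral cutoff $\epsilon > 0$ and split $\tilde f_\rho = \tilde f_\rho\chi_{\{|\lambda|>\epsilon\}} + \tilde f_\rho\chi_{\{|\lambda|\leq\epsilon\}}$. The first piece is a bounded function of $H_N$ and decays exponentially in $|\xi-\eta|$ after an application of \eqref{eq:complloc0}; the second is controlled by an additional probabilistic input on the random model -- for instance a Wegner-type bound on the expected number of eigenvalues of $H_N$ in $[-\epsilon,\epsilon]$ -- with $\epsilon$ optimized against the exponential gain in $|\xi-\eta|$. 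Combining these inputs and invoking Theorem \ref{thm:majodis} yields the asserted exponential decay with some rate $\mu' > 0$.
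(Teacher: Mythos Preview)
Your treatment of the cases with $w=3$ or $w'=3$, and of the $12$-plane correlations for eigenstates, is correct and matches the paper's argument.

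The gap is in the thermal $12$-plane case. Your proposed route via the twisted functional $\llangle\cdot\rrangle$ and Case~3 of Lemma~\ref{lem:Bcormat} runs into the singular symbol $2(1-e^{2\beta\lambda})^{-1}$, and to control it you invoke ``an additional probabilistic input on the random model --- for instance a Wegner-type bound''. But no such input is among the hypotheses of Theorem~\ref{thm:XY0}; only almost-sure simplicity of the spectrum and the eigenfunction-correlator decay~\eqref{eq:complloc0} are assumed. So as written your argument does not prove the theorem --- it proves a weaker statement under an extra assumption. (Even granting a Wegner estimate, making the cutoff argument uniform in $\beta$ and $N$ is delicate, since the singularity scales like $1/(\beta\lambda)$.)

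The paper sidesteps this completely by observing that the thermal state is a convex combination of eigenstates,
\[
\langle \cdot \rangle_\beta = \sum_\alpha p_\alpha \langle \cdot \rangle_\alpha, \qquad p_\alpha \ge 0,\ \sum_\alpha p_\alpha = 1,
\]
so that
\[
\left| \langle \tau_t(\sigma_\xi^w)\sigma_\eta^{w'}\rangle_\beta \right| \le \sup_\alpha \left| \langle \tau_t(\sigma_\xi^w)\sigma_\eta^{w'}\rangle_\alpha \right|.
\]
One then applies the eigenstate bound with an additional supremum over $\alpha$ inside the expectation; this is permitted by the first remark following Theorem~\ref{thm:majodis}, and the required two-point input is the uniform estimate~\eqref{eq:unifcorbound}, which follows directly from~\eqref{eq:complloc0} since $\|2\chi_{\Delta_\alpha}\|_\infty \le 2$ for every $\alpha$. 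No spectral cutoff or Wegner bound is needed.
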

\begin{proof}
In case $ w= 3 $ or $w' = 3$, the claim immediately follows from Lemma~\ref{lem:3} and~\eqref{eq:singlecor} with $ f_\rho \in L^\infty $ bounded by $ \| f_\rho \|_\infty = 2 $, cf.\ Lemma~\ref{lem:Bcormat}. In this case, the only non-trivial correlation is
\begin{align}\label{eq:33corr}
\mathbb{E} \left[ \sup_{t \in \mathbb{R}} \left| \langle \tau_t(\sigma_\xi^3) \sigma_\eta^{3} \rangle - \langle \sigma_\xi^3 \rangle \langle \sigma_\eta^{3} \rangle \right| \right] 
\ & \leq \ 2 \sup_{\#,\flat \in \{\pm\} } 
\mathbb{E} \left[  \sup_{t \in \mathbb{R}} \left| \langle \delta_\xi^\# , e^{-2it H_N} f_\rho(H_N) \delta_\eta^\flat \rangle \right| \right] \notag \\
& \leq \ 4 \,  C e^{-\mu |\xi - \eta|} \, . 
\end{align}
In case $ w, w' \in \{1,2\} $, we first restrict the discussion to the case of eigenstates, $ \rho = |\Psi_\alpha\rangle\langle \Psi_\alpha | $ and envoke the representation~\eqref{eq:12cor}. 
Given~\eqref{eq:Fermparityeigen}, the prefactor in~\eqref{eq:12cor} is bounded by one. One thus has for $ w,w' \in \{1,2\} $:
\begin{align}
 \left| \langle \tau_t(\sigma_\xi^w) \sigma_\eta^{w'} \rangle_\alpha  \right|  
= \left| \langle a^+_{1}(t) a^-_{1}(t) \cdots a^+_{\xi-1}(t) a^-_{\xi-1}(t) \ a_\xi^{\#_w}(t) a_\eta^{\flat_{w'}} a^+_{\eta+1} a^-_{\eta +1} \dots a^+_{N} a^-_{N} \rangle_\alpha \right|  \, . 
\end{align}
Since eigenstates are quasi-free, the right-hand side is the pfaffian~\eqref{eq:XYPfaffian} (with $ \llangle \cdot \rrangle $ replaced by $ \langle \cdot \rangle_\alpha $). The claim thus follows from Theorem~\ref{thm:majodis} using 
\begin{equation}\label{eq:unifcorbound}
\mathbb{E} \left[ \sup_{\alpha} \sup_{t \in \mathbb{R}} \left| \langle \delta_\xi^\# , e^{-2it H_N} 2 \chi_{\Delta_\alpha}(H_N) \delta_\eta^\flat \rangle \right| \right] \leq 2 \,  C e^{-\mu |\xi - \eta|} \, , 
\end{equation}
for all $ \xi , \eta $. Note that (\ref{eq:unifcorbound}) follows from Lemma~\ref{lem:Bcormat} and assumption~\eqref{eq:complloc0}. Moreover, as indicated in (\ref{configdist}), the distance of the configuration of Majorana fermions entering the pfaffian~\eqref{eq:XYPfaffian} is $ |\xi-\mu| $. 

In case of thermal states, $ \rho = e^{-\beta S_N} / \tr e^{-\beta S_N } $, the result in case $ w,w' \in \{1,2\} $ follows from the above, since 
\begin{equation}
	 \left| \langle \tau_t(\sigma_\xi^w) \sigma_\eta^{w'} \rangle_\beta \right| \leq \ \sup_\alpha \left| \langle \tau_t(\sigma_\xi^w) \sigma_\eta^{w'} \rangle_\alpha\right|
\end{equation}
The claimed bound is hence a consequence of Theorem~\ref{thm:majodis} with the help of~\eqref{eq:unifcorbound} and taking the first remark below  Theorem~\ref{thm:majodis} into account.
\end{proof}

Several remarks apply:
\begin{enumerate}
\item As was shown in~\cite[Prop. A.1]{AS}, $S_N$ and hence $ H_N $ has simple spectrum for Lebesgue-almost all $ \{ \nu_\xi \} \in \mathbb{R}^N $. Taking $  \{ \nu_\xi \} $ independently distributed random variables with a single-site distribution which is absolutely continuous hence implies that $ H_N $ has almost surely simple spectrum. (Since the latter is symmetric about the origin, this in particular implies that the kernel of $ H_N $ is trivial almost surely.)
\item As was explained in Subsection~\ref{ss:JW}, in the isotropic case ($ \gamma_\xi = 0 $) and for homogeneous spin coupling ($ \mu_\xi = \mu $), the Hamiltonian $ H_N $ reduces to (two copies of) the Anderson model with random potential $ \{ \nu_\xi \} $. In this case, strong dynamical localization in the sense of~\eqref{eq:complloc0} is known to occur for iid random variables under fairly general conditions on the single-site distribution (cf.~\cite{AW} and references therein). 

In the non-isotropic, but homogeneous case ($\gamma_\xi = \gamma$ and $ \mu_\xi = \mu$) a less complete picture is generally available.  A result in~\cite{ESS} covers the regime of large disorder in case $ \{ \nu_\xi \} $ are iid with absolutely continuous distribution with a compact support.  In \cite{CS} strong dynamical localization~\eqref{eq:complloc0} is established for strong enough spin coupling $ |\mu |$.  

\end{enumerate}

Theorem~\ref{thm:XY0} applies to all eigenstates and the thermal states. If one just aims at a localization statement concerning thermal states or ground-state, without any dynamics, less has to be assumed.
\begin{theorem}[Localization for thermal states or the ground-state]\label{thm:XY1}
Suppose that the single-particle Hamiltonian $ H_N $ associated with the spin-chain $ S_N $ is a random operator on $ \ell^2(\{1,\dots,N\};\mathbb{C}^2) $ which for all $ N \in \mathbb{N} $ satisfies:
\begin{enumerate}
\item $ H_N $ has almost-surely a trivial kernel.
\item the Green function of $H_N$ at zero exhibits fractional moment localization in the sense that for some $ s \in (0,1)$ and all $ \xi,\eta \in \{1,\dots , N\} $:
\begin{equation}\label{eq:complloc}
\max_{\#,\flat \in \{\pm\} } \sup_{\gamma \in \mathbb{R}} \, \mathbb{E}\left[ \left| \langle \delta_{\xi}^\#  , (H_N - i\gamma)^{-1}  \delta_{\eta}^\flat \rangle \right|^s \right] \leq C \, e^{-\mu |\xi - \eta | }
\end{equation}
with some $ N $ independent constants $ C, \mu \in (0,\infty) $.
\end{enumerate} 
Then the thermal spin correlations exhibit  localization in the sense that there is $ C', \mu' \in (0,\infty) $ for which, given any $ N \in \mathbb{N} $:
 \begin{equation}\label{eq:explocthm}
\mathbb{E} \left[  \left| \langle \sigma_\xi^w \sigma_\eta^{w'} \rangle_\beta - \langle \sigma_\xi^w \rangle_\beta \langle \sigma_\eta^{w'} \rangle_\beta \right| \right] \leq C' \max\{ 1, \beta^{-s} \}  \, e^{-\mu' |\xi-\eta|} \, 
\end{equation}
for all $ w,w' \in \{1,2,3\} $, all $ \xi, \eta \in \{ 1,\dots , N \} $, and  all  $ \beta \in (0,\infty] $ . 
\end{theorem}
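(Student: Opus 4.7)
The strategy mirrors the proof of Theorem~\ref{thm:XY0}: first reduce the spin correlations to pfaffians of static Majorana two-point functions via the Jordan--Wigner formalism of Subsection~\ref{ss:JW}, then invoke Theorem~\ref{thm:majodis}. What changes is the input: instead of the eigenfunction-correlator bound used in Theorem~\ref{thm:XY0}, the Aizenman--Molchanov hypothesis~\eqref{eq:complloc} must be processed to yield an expectation bound on the relevant static two-point functions $\langle\delta_x^\#,f_\rho(H_N)\delta_y^\flat\rangle$ identified by Lemma~\ref{lem:Bcormat}. Since no time evolution appears, only the static correlation $\Gamma^{\mathcal{A}}(0,0)$ enters the picture.

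For the reduction, the pieces of~\eqref{eq:corrfct} containing a factor $\sigma^3$ are handled directly by Lemma~\ref{lem:3}, which expresses them as products of at most four two-point functions $\langle a_\xi^\# a_\eta^\flat\rangle_\beta$; Case~1 of Lemma~\ref{lem:Bcormat} gives the relevant $f_\rho=1+\tanh(\beta\,\cdot)$. For $w,w'\in\{1,2\}$ I would apply~\eqref{eq:12cor} at $t=0$, yielding $\langle\sigma_\xi^w\sigma_\eta^{w'}\rangle_\beta=(\mathrm{sign})\,\tr(P_N\rho_\beta)\,\llangle\cdots\rrangle$, where $\llangle\cdot\rrangle$ is the quasi-free functional associated with $P_N\rho_\beta$; its two-point function is that of Case~3 of Lemma~\ref{lem:Bcormat} (i.e.\ $f_\rho=1-\coth(\beta\,\cdot)$) and its Majorana configuration has $r(x)=|\xi-\eta|$ by~\eqref{configdist}. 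The trivial-kernel assumption ensures $\tr(P_N\rho_\beta)\neq 0$ almost surely, so $\llangle\cdot\rrangle$ is bounded with random $M_0=1/|\tr(P_N\rho_\beta)|$; the factor $M_0^{-1}$ appearing in the conclusion of Theorem~\ref{thm:majodis} then cancels the $\tr(P_N\rho_\beta)$-prefactor, so the output of Theorem~\ref{thm:majodis} directly estimates $\mathbb{E}|\langle\sigma_\xi^w\sigma_\eta^{w'}\rangle_\beta|$.

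The technical heart of the proof is then to verify the hypothesis of Theorem~\ref{thm:majodis}, namely
\begin{equation*}
\mathbb{E}\bigl|\langle\delta_x^\#,f_\rho(H_N)\delta_y^\flat\rangle\bigr|\leq C\max\{1,\beta^{-s}\}\,e^{-\mu'|x-y|}
\end{equation*}
for both $f_\rho$'s above. My plan is to use the Mittag--Leffler / Matsubara expansion
\begin{equation*}
\tanh(\beta\lambda)=\frac{1}{\beta}\sum_{n\geq 0}\bigl[(\lambda-i\omega_n)^{-1}+(\lambda+i\omega_n)^{-1}\bigr],\qquad \omega_n=\tfrac{(2n+1)\pi}{2\beta},
\end{equation*}
and its analogue for $\coth$, which in addition carries the pole term $\tfrac{1}{\beta\lambda}$, i.e.\ $\tfrac{1}{\beta}H_N^{-1}$, well-defined a.s.\ thanks to the trivial-kernel hypothesis. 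Applying subadditivity $|\sum a_n|^s\leq\sum|a_n|^s$ (valid since $s<1$), controlling each summand via~\eqref{eq:complloc}, and interpolating the fractional moment against the deterministic pointwise bound $|\langle\delta_x^\#,(H_N-i\omega)^{-1}\delta_y^\flat\rangle|\leq 1/|\omega|$ through $\mathbb{E}|X|\leq\|X\|_\infty^{1-s}\mathbb{E}|X|^s$ handles the low-frequency terms, while for $|\omega|\gtrsim\|H_N\|$ the Neumann expansion yields the deterministic geometric estimate $|(H_N-i\omega)^{-1}(x,y)|\leq \|H_N\|^{|x-y|}|\omega|^{-|x-y|-1}$. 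The factor $\max\{1,\beta^{-s}\}$ is forced by the lowest Matsubara mode $\omega_0=\pi/(2\beta)$: its fractional-moment contribution is $\beta^{-1}\omega_0^{-(1-s)}=O(\beta^{-s})$.

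The main obstacle will be the $\tfrac{1}{\beta}H_N^{-1}$ pole term in the coth expansion: the hypothesis~\eqref{eq:complloc} at $\gamma=0$ only supplies $\mathbb{E}|H_N^{-1}(x,y)|^s\leq Ce^{-\mu|x-y|}$, and the interpolation $\mathbb{E}|X|\leq\|X\|_\infty^{1-s}\mathbb{E}|X|^s$ fails here because $\|H_N^{-1}\|$ is only almost surely--not uniformly--finite (it is governed by the random spectral gap, itself subject only to a Wegner-type tail bound). The way around is to keep fractional moments throughout the pfaffian analysis and perform the final $L^1$-conversion at the level of the spin correlation itself, whose deterministic $L^\infty$-norm is $1$; this last step is what ultimately produces the $\max\{1,\beta^{-s}\}$ prefactor in the stated bound. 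The $\beta=\infty$ case (ground state) is treated as the limit where $\coth(\beta\lambda)\to\mathrm{sgn}(\lambda)$ and the singular pole term drops out entirely, giving the cleaner $C'e^{-\mu'|\xi-\eta|}$ bound.
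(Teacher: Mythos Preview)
Your strategy coincides with the paper's: reduce via Jordan--Wigner and Lemmas~\ref{lem:3}, \ref{lem:Bcormat}; handle the $\sigma^3$-correlations directly; and for $w,w'\in\{1,2\}$ invoke Theorem~\ref{thm:majodis} on the twisted functional~\eqref{eq:twistedcorr} with $M_0^{-1}=|\tr(P_N\rho_\beta)|$. The difference lies in how the two-point function is processed. Where you propose a full Matsubara series split into low frequencies (fractional moments plus interpolation) and high frequencies (Neumann expansion), the paper uses the equivalent but cleaner finite-pole decomposition $f_\rho=2Q_1+2Q_2$ of~\cite{AG98}: $Q_1$ is a \emph{finite} sum of resolvents at imaginary parts $|\gamma|<\epsilon_\beta\sim 1$, each bounded via the interpolation $\mathbb{E}|X|\le\|X\|_\infty^{1-s}\mathbb{E}|X|^s$ against~\eqref{eq:complloc}, and $Q_2$ is an analytic remainder handled deterministically by a Combes--Thomas estimate. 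The Combes--Thomas route is safer than your Neumann expansion, since the latter requires $|\omega|>\|H_N\|$ and $\|H_N\|$ is random (and possibly unbounded if the field $\{\nu_\xi\}$ is), whereas Combes--Thomas needs only control of the hopping amplitude. You correctly flag the $\tfrac{1}{2\beta}H_N^{-1}$ pole in the $\coth$ (twisted) two-point function as the main obstacle---the interpolation is unavailable there because $\|H_N^{-1}\|$ is not uniformly bounded. The paper dispatches this point with ``estimated similarly as above'' and does not make the resolution explicit; your proposed fix---carrying fractional moments through the pfaffian bound and converting to $L^1$ only at the spin-correlation level via $|\langle\sigma_\xi^w\sigma_\eta^{w'}\rangle|\le 1$---is a sound way to close this, but note it requires a fractional-moment variant of Theorem~\ref{thm:majodis}: in that proof the Jensen step $\mathbb{E}[X^{1/3}]\le(\mathbb{E}X)^{1/3}$ presupposes the $L^1$ hypothesis~\eqref{ass:majdis}, and you would instead use subadditivity with exponent $s/3$.
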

\begin{proof}
We will only give a proof in case $ \beta \in (0,\infty) $ since the ground-state case $ \beta = \infty $ follows by a limiting argument.

In case $ w= 3 $ or $w' = 3$, we proceed as in the proof of Theorem~\ref{thm:XY0}. In particular, in the only non-trivial case $ w=w'=3 $, we use the first estimate in~\eqref{eq:33corr}
in which $ f_\rho(\lambda) = 2 (1+e^{-2\beta\lambda} )^{-1} $. Using an argument from~\cite{AG98}, we may write for $ \epsilon_\beta = \lceil \frac{\beta}{\pi} \rceil \frac{\pi}{\beta}  \in [ 1, 1+\pi/\beta) $ and all $ \lambda \in \mathbb{C}\backslash \{   \frac{i \pi n}{2\beta} \, | \, n \in \mathbb{Z}  \, \mbox{odd} \} $ with $ | \Im \lambda | <  \epsilon_\beta $:
\begin{align}\label{eq:expag}
 f_\rho(\lambda) =& \  2 Q_1(\lambda) + 2 Q_2(\lambda) \quad \mbox{with} \notag \\
&  Q_1(\lambda) := \frac{1}{2\beta} \sum_{\substack{n \in \mathbb{Z}  \, \mbox{odd} \\  \frac{\pi |n|}{2\beta} < \epsilon_\beta} } \left(  \frac{i \pi n}{2\beta} - \lambda \right)^{-1} \, , \\
&   Q_2(\lambda) := \int_{-\infty}^\infty   f_\rho(u) \left[ \frac{1}{u-i\epsilon_\beta-\lambda} - \frac{1}{u+i\epsilon_\beta-\lambda}\right] \, du \, . 
\end{align}
The contribution $ \left| \langle \delta_\xi^\# ,Q_2(H_N) \delta_\eta^\flat \rangle \right| $ is estimated with the help of a Combes-Thomas bound, cf.~\eqref{eq:CombesThomas} and~\cite{AG98}. The remaining contribution  is estimated using~\eqref{eq:complloc}. More explicitly:
\begin{align}
2 \, \mathbb{E}\left[ \left| \langle \delta_\xi^\# ,Q_1(H_N) \delta_\eta^\flat \rangle \right| \right] \ & \leq \frac{1}{\beta} \sum_{\substack{n \in \mathbb{Z}  \, \mbox{odd} \\  \frac{\pi |n|}{2\beta} < \epsilon_\beta} } \frac{(2\beta)^{1-s}}{\pi^{1-s} |n|^{1-s} }\mathbb{E}\left[ \big| \langle \delta_\xi^\#, \big(  H_N - \frac{i \pi n}{2\beta} \big)^{-1}  \delta_\eta^\flat \rangle \big|^s\right] \notag \\
& \leq \frac{2^{2-s} C }{\pi^{1-s} \beta^s } \, e^{-\mu |\xi-\eta|} \, \sum_{n=1}^{2 \lceil \frac{\beta}{\pi} \rceil} \frac{1}{n^{1-s}} \, , 
\end{align}
where the last sum is bounded by a constant times $ \beta^s $ for all $ \beta \geq \pi  $.

In the other cases $ w, w' \in \{ 1,2\} $ we rewrite using~\eqref{eq:12cor}:
\begin{align}
\left| \langle \sigma_\xi^w \sigma_\eta^{w'} \rangle_\beta \right|   = & \  | \tr P_N e^{-\beta S_N} | \,  \\
& \times \left|  \llangle  a^+_{1} a^-_{1} \cdots a^+_{\xi-1} a^-_{\xi-1} \ a_\xi^{\#_w} a_\eta^{\flat_{w'}} a^+_{\eta+1} a^-_{\eta +1} \dots a^+_{N} a^-_{N} \rrangle_\beta \right| \, . 
\notag
\end{align}
The arising pfaffian~\eqref{eq:XYPfaffian} satisfies the requirements of Therem~\ref{thm:majodis} with $ M_0^{-1} =  | \tr P_N e^{-\beta S_N} | \neq 0 $. To verify the other assumption~\eqref{ass:majdis} in this theorem, we note that by Lemma~\ref{lem:Bcormat} 
\begin{equation}
\llangle a^\#_{\xi} a_\eta^\flat \rrangle_\beta =  \langle \delta_\xi^\# , g_\rho(H_N) \delta_\eta^\flat \rangle \, , \quad g_\rho(\lambda) = 2 (1-e^{2\beta\lambda})^{-1} \, .
\end{equation}
Since $  g_\rho(\lambda) =  f_\rho(-\frac{i\pi}{2\beta} -\lambda) $ we may use~\eqref{eq:expag} together with the fact that $ \pi/(2\beta) < \epsilon_\beta $ and that the kernel of $ H_N $ is trivial to rewrite
\begin{align}\label{eq:expag2}
 g_\rho(H_N)   =& \  2 \, Q_1\left(-\frac{i\pi}{2\beta} -H_N\right)  + 2\,  Q_2\left(-\frac{i\pi}{2\beta} -H_N\right) \, . 
\end{align}
The contribution of the second terms is again bounded using the Combes-Thomas estimate from~\cite{AG98}. For its application note that $ \epsilon_\beta - \pi/(2\beta) \geq 1/2 $. The first term is estimated similarly as above.
\end{proof}

Some remarks:

\begin{enumerate}
\item Quite generally, it is known that thermal states associated to one-dimensional, many-body quantum lattice
systems satisfy exponential decay of correlations, or exponential clustering. In fact, Araki showed \cite{A0} that analyticity arguments
allow one to use Ruelle's classical transfer matrix methods, see e.g. \cite{R}, to prove that the Gibbs state of (e.g. finite range) 
one-dimensional systems satisfy exponential clustering at any positive temperature. Consequently, Araki's result yields, deterministically,
exponentially decaying bounds on thermal states of the XY-model. By contrast, our averaged bounds, in this random setting, are not only more explicit,
they are also uniform, in the sense that they survive the $\beta \to \infty$ limit. 

\item Some previous results concerning decay of correlations in random XY-models exist. In \cite{KP}, some bounds 
on static correlations of certain observables in the ground state of the isotropic XY-model are considered. 
 In \cite{HSS}, a bound on averaged, static ground state correlations  
is proven, again for the random isotropic XY-model. More precisely, for a chain of length $N \geq 1$, a bound of the form
\begin{equation}
\mathbb{E} \left( | \langle AB \rangle - \langle A \rangle \langle B \rangle | \right) \leq C N \| A \| \| B \| e^{- \mu |\xi - \eta|} \quad \mbox{for all } A \in \mathcal{A}_{\{ \xi\} }, B \in \mathcal{A}_{\{ \eta \} }
\end{equation} 
is obtained by combining a zero-velocity Lieb-Robinson bound, a Lifshitz tails estimate for the Anderson model, and well-known methods, see e.g. \cite{NS},
for deriving correlation decay in the ground state of gapped many-body systems. (Suggestions for improvement of this method can be found in~\cite{E}.)

\end{enumerate}

\section{Upper Bounds on Certain Bordered Determinants and Pfaffians}

The technical core of this paper, on which the proofs of our main results rest, are two estimates on certain bordered determinants and pfaffians.
\subsection{Determinants}
The following is the main new technical  result for determinants.
\begin{theorem} \label{Had2.0}
Consider a complex matrix $M\in \mathbb{C}^{(n+1)\times(n+1)}$ with the following block structure
\begin{equation} \label{defM}
M = \left( \begin{matrix}  \alpha & v_1^T & v_2^T \\
w_1 & A & B \\
w_2 & C & D \end{matrix} \right) 
\end{equation}
with $ \alpha \in \mathbb{C} $, column vectors $ v_1 ,w_1 \in \mathbb{C}^p $ and $v_2 , w_2 \in \mathbb{C}^q $, and blocks $ A \in \mathbb{C}^{p\times p} $, $ B \in   \mathbb{C}^{p\times q} $, $ C \in  \mathbb{C}^{q\times p}  $, and $ D \in  \mathbb{C}^{q\times q} $ with $ p+q = n $. 
	If  $ \| M \| \leq 1 $, then we have that
	\begin{equation}\label{eq:MainDetBound}
		\left| \det M \right| \leq  | \alpha| + \| v_2 \| + \| w_1 \| + \|B \| + 2 \sqrt{ \| v_1 \|( \| w_1\| + \| B \|)}
	\end{equation}
\end{theorem}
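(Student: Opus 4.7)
I would use the Jacobi (bordered determinant) identity
\[
\det M \;=\; \alpha \det N \,-\, u^T\, \adj(N) \, w,
\]
where $u=\binom{v_1}{v_2}$, $w=\binom{w_1}{w_2}$ and $N = \begin{pmatrix} A & B \\ C & D \end{pmatrix}$. Blocking the adjugate as $\adj(N) = \begin{pmatrix} X & Y \\ Z & W \end{pmatrix}$ in the natural $p$-$q$ decomposition gives
\[
u^T \adj(N)\, w \;=\; v_1^T X w_1 + v_1^T Y w_2 + v_2^T Z w_1 + v_2^T W w_2,
\]
and I would bound each of these four contributions plus $\alpha\det N$ separately.

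The assumption $\|M\|\le 1$ yields $\|N\|\le 1$, hence $|\det N|\le 1$. The singular-value structure of the adjugate (the singular values of $\adj(N)$ are $\{\prod_{j\neq i}\sigma_j(N)\}_{i=1}^n$, each bounded by $1$) gives $\|\adj(N)\|\le 1$, which propagates to $\|X\|,\|Y\|,\|Z\|,\|W\|\le 1$. Combined with $\|v_2\|,\|w_2\|\le \|M\|\le 1$, this immediately furnishes
\[
|\alpha\det N|\le |\alpha|,\quad |v_2^T Z w_1|\le \|w_1\|,\quad |v_2^T W w_2|\le \|v_2\|,\quad |v_1^T X w_1|\le \|v_1\|\|w_1\|.
\]

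The crucial remaining ingredient is the sharper estimate $\|Y\|\le \|B\|$. Qualitatively, $B=0$ forces $N$ to be block lower triangular, so $\adj(N)$ inherits this block-triangular structure and $Y$ vanishes. Quantitatively, the identities $\adj(N)\,N=\det(N)I$ and $N\,\adj(N)=\det(N)I$ yield the off-diagonal relations $XB+YD=0$ and $AY+BW=0$. These, together with the norm bound $\|\adj(N)\|\le 1$, can be combined with a careful analysis of the SVD of $N$---for instance through the parametrised family $N_t = \begin{pmatrix} A & tB \\ C & D \end{pmatrix}$ and Jacobi's formula for the derivative of $\adj$---to conclude $\|Y\|\le\|B\|$. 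This is the main obstacle of the proof, in particular because neither $A$ nor $D$ need be invertible. Granting it, $|v_1^T Y w_2|\le \|v_1\|\|B\|$.

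Summing the five estimates gives $|\det M|\le |\alpha|+\|v_2\|+\|w_1\|+\|v_1\|\bigl(\|w_1\|+\|B\|\bigr)$. Finally, since $t:=\|v_1\|(\|w_1\|+\|B\|)\le \|w_1\|+\|B\|\le 2$, one has $\sqrt t\le\sqrt 2<2$, hence $\sqrt{t}\,(\sqrt{t}-2)\le 0\le \|B\|$, which rearranges to $t\le \|B\|+2\sqrt{t}$. Substituting yields the claim
\[
|\det M|\;\le\;|\alpha|+\|v_2\|+\|w_1\|+\|B\|+2\sqrt{\|v_1\|(\|w_1\|+\|B\|)}.
\]
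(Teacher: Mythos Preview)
Your approach via the bordered determinant identity $\det M=\alpha\det N-u^T\adj(N)w$ is genuinely different from the paper's, but it hinges entirely on the claim $\|Y\|\le\|B\|$ for the upper off-diagonal block $Y$ of $\adj(N)$, and you have not proved this. You yourself flag it as ``the main obstacle'' and then write ``Granting it, \dots''. The two relations $AY+BW=0$ and $XB+YD=0$ only give $\|Y\|\le\|A^{-1}\|\,\|B\|\,\|W\|$ or $\|Y\|\le\|X\|\,\|B\|\,\|D^{-1}\|$ when the relevant blocks are invertible, and neither $\|A^{-1}\|$ nor $\|D^{-1}\|$ is controlled by $\|N\|\le1$. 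The vague appeal to ``a careful analysis of the SVD of $N$'' and ``Jacobi's formula for the derivative of $\adj$'' is not an argument; in particular, the family $N_t$ need not satisfy $\|N_t\|\le1$, so whatever differential identity you have in mind does not obviously close. Even if the inequality $\|Y\|\le\|B\|$ happens to be true (it holds in all small cases one checks, e.g.\ for $q=1$ one has $Y=-\adj(A)B$ and $\|\adj(A)\|\le1$), you have supplied no proof, and this is precisely the step that carries all the content.

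For comparison, the paper avoids the adjugate entirely. It applies a block-diagonal unitary $\mathcal{U},\mathcal{V}$ so that $Uv_1=(0,\dots,0,\|v_1\|)^T$ and $VAU^T$ is upper triangular with diagonal $(\alpha_1,\dots,\alpha_p)$. It then splits into two cases according to whether the last diagonal entry $\alpha_p$ is small ($|\alpha_p|\le\varepsilon$) or not. If $|\alpha_p|\le\varepsilon$, Hadamard's inequality applied to the $(p{+}1)$-st row of the transformed matrix gives $|\det M|\le\|w_1\|+\varepsilon+\|B\|$. If $|\alpha_p|>\varepsilon$, a single row operation with multiplier $\|v_1\|/\alpha_p$ eliminates the only nonzero entry of $Uv_1$ from the first row, after which Hadamard on that row gives $|\det M|\le|\alpha|+\|v_2\|+\varepsilon^{-1}\|v_1\|(\|w_1\|+\|B\|)$. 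Optimizing $\varepsilon=\sqrt{\|v_1\|(\|w_1\|+\|B\|)}$ yields the stated bound. This argument is elementary and complete; yours, as it stands, has a gap exactly at the point you identify.
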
 
While this bound is not sharp, it also does not result from a straightforward application of
Hadamard's inequality \cite{Muir} which asserts that 
\begin{equation}\label{eq:Hadamar}
| \det M |  \leq \prod_{j=1}^{n+1} \| r_j( M) \| \leq \|M \|^n \, \min_j \| r_j( M) \|  \, . 
\end{equation} 
Here $ r_j(M) $ denote the row vectors of the matrix and the last inequality follows since the Euclidean norm of any row is bounded by the matrix norm. 
Since we deal with matrices satisfying $ \| M \| \leq 1 $, the determinant is then bounded by $ \min_j \| r_j( M) \|  $. 
None of the row vectors, in general, have a norm comparable with the right side of~\eqref{eq:MainDetBound}. 
In addition to Hadamard's inequality~\eqref{eq:Hadamar}, the proof of Theorem~\ref{Had2.0} is based on a change of basis and the invariance of the determinant under row operations. 
\begin{proof}[Proof of Theorem~\ref{Had2.0}]
Let $ U  \in  \mathbb{C}^{p\times p}  $ be a unitary transformation for which 
\begin{equation}\label{def:U}
(U v_1)^T = (0, \dots, 0 , \| v_1 \| ) 
\end{equation}
 and take $V \in  \mathbb{C}^{p\times p}  $ to be a unitary which transforms $ A U^T $ into an upper triangular matrix:
\begin{equation}\label{def:V}
V A U^T =  \left( \begin{matrix}  \alpha_{1}  &  \ldots  &\\
&  \ddots    & \vdots \\
  \mbox{} & &   \alpha_{p} \end{matrix} \right) 
\end{equation}
with some $ \alpha_{j} \in \C $. Lifting these matrices 
\begin{equation}\label{eq:UV}
	 \mathcal{U} :=  \left( \begin{matrix} U & 0 \\ 0 & \mathbbm{1}_q \end{matrix} \right) \in \mathbb{C}^{n\times n} \, , \quad  \mathcal{V} :=  \left( \begin{matrix} V & 0 \\ 0 & \mathbbm{1}_q \end{matrix} \right) \in \mathbb{C}^{n\times n} \,  
	\end{equation}
we have
\begin{equation}\label{eq:utrafo}
\tilde{M} := \left( \begin{matrix} 1 & 0 \\ 0 &  \mathcal{V}  \end{matrix}\right) M \left( \begin{matrix}   1 & 0  \\ 0 &  \mathcal{U}^T \end{matrix}\right) = \left( \begin{matrix}  \alpha & (U v_1)^T & v_2^T \\
V w_1 & V A U^T & V B \\
w_2 & C U^T  & D \end{matrix} \right)  \, .
\end{equation}
 Since the unitary transformations leave the norm as well as the modulus of the determinant invariant, 
an application of Hadamard's inequality~\eqref{eq:Hadamar} yields
\begin{equation}
 | \det M | = | \det \tilde{M}|  \leq   \min_j \| r_j ( \tilde{M})\|     \, .
 \end{equation}
We now distinguish between two cases with a variational parameter $ \varepsilon > 0 $.

\begin{description}
\item[{\bf Case 1:}] 
Suppose that $ |\alpha_{p} | \leq \varepsilon $.
The norm of the $(1+p)$th row of $ \tilde{M} $ can then be estimated as
\begin{align}\label{Case1}
\| r_{p+1}(\tilde{M})  \| =  \  \sqrt{ | (Vw_1)_p|^2 +  |\alpha_{p} |^2 + \| r_p(VB) \|^2 } 
 \leq  \| w_1 \| + \varepsilon + \| B \| 
\end{align}
where we used the fact that $ \| r_j(VB) \| \leq \| V B \| \leq \| B \|  $. Thus we have $ | \det \tilde{M}|  \leq  \| w_1 \| + \varepsilon + \| B \|  $. 
\item[{\bf Case 2:}] Suppose that $ |\alpha_{p} | > \varepsilon $.
We then use row operations (which leave the determinant of $ \tilde{M} $ invariant) to eliminate the non-zero entry in $Uv_1$ from the first
row.
The other components in the first row are then modified as follows:
\begin{align} \label{eq:rowop}
\alpha \mapsto \alpha' & := \alpha - s\,   \left( Vw_1\right)_p \notag \\
v_2\mapsto v'_2 & :=  v_2 -  s\,   r_p( VB) \, , \quad \mbox{with} \; s:=\frac{\| v_1 \| }{\alpha_{p}}  .
\end{align}
As a consequence, the norm of this modified first row can be estimated by
\begin{align}
\sqrt{|\alpha'|^2 + \left\| v'_2\right\|^2}   \leq &  \  | \alpha ' | + \| v_2' \|  \leq   | \alpha| +  |s| \  \| Vw_1\|  + \| v_2 \|  +   |s|\ \|  r_p( VB)\| \nonumber \\
\leq &  \ | \alpha| + \frac{1}{\varepsilon} \, \| v_1\| \| w_1\| + \| v_2 \| + \frac{\| v_1 \|}{\varepsilon} \| B\|
\end{align}
Therefore
\begin{equation}\label{Case2}
\left| \det \tilde{M} \right| \leq  \min_j \| r_j( \tilde{M}) \| \leq | \alpha | + \| v_ 2 \| + \frac{\| v_1 \| }{\varepsilon} \left( \|w_1 \| + \| B \|\right) \,  .
\end{equation}
\end{description}

Summarizing, the sum of the left sides of~\eqref{Case1} and \eqref{Case2} constitute an upper bound on $ |\det M | $. Optimizing over $  \varepsilon > 0 $, i.e., taking $\varepsilon = \sqrt{ \| v_1 \|( \| w_1\| + \| B \|)}$, we then arrive at the bound claimed in (\ref{eq:MainDetBound}).
\end{proof}

\subsection{Pfaffians}

Since pfaffians belong to a less popular branch of linear algebra, let us start this subsection by reviewing some basic facts which will be of relevance. (For proofs and much more, see \cite{Muir}.)

Let $M \in \mathbb{C}^{m\times m} $ be a skew-symmetric matrix, i.e. $M^T = -M$.  
In the even case, i.e. $m= 2 n $ for some $ n \in \mathbb{N} $, the pfaffian is defined by 
\begin{equation} \label{def:pfaf}
{\rm pf}[M] = \frac{1}{2^n n!} \sum_{\pi \in S_{2n}} {\rm sgn}(\pi) \prod_{j=1}^n a_{\pi(2j-1), \pi(2j)}
\end{equation}
where $S_{2n}$ is the symmetric group of permutations and ${\rm sgn}( \pi)$ is the sign of
the permutation $\pi \in S_{2n}$. (Taking the skew-symmetry into account this definition is seen to coincide with~\eqref{def:quasifree}.) 
The pfaffian of any skew-symmetric  matrix with $m$ odd is defined to be $0$. 
It is also convention to define the pfaffian of a $0 \times 0$ matrix to be $1$. 

Pfaffians share many similarities with determinants. First, they are invariant under certain elementary row operations
which must be partnered with corresponding column operations to preserve skew-symmetry:
 \begin{enumerate}
 \item Let $\tilde{M}$ be the matrix obtained from $M$ by multiplying a given row and the corresponding column of 
 $M$ by a constant $\lambda$. Then ${\rm pf}[\tilde{M}] = \lambda {\rm pf}[M]$. 
 \item Let $\tilde{M}$ be the matrix obtained from $M$ by simultaneously interchanging two distinct 
 rows and the corresponding columns. Then ${\rm pf}[\tilde{M}] = - {\rm pf}[M]$.  
 \item Let $\tilde{M}$ be the matrix obtained from $M$ by taking a multiple of a given row and the corresponding column and
 adding it to another row and the corresponding column. Then ${\rm pf}[\tilde{M}] = {\rm pf}[M]$.
\end{enumerate}

Next, pfaffians satisfy a Laplace expansion. The simplest case is an expansion along the first row/column,
\begin{equation} \label{eq:Laplace}
{\rm pf}[M] = \sum_{\ell =2}^{2m} m_{1, \ell} (-1)^{\ell} {\rm pf}[M_{\hat{1} \hat{\ell}}] \, , 
\end{equation}
where $M_{\hat{1} \hat{\ell}}$ is the sub-matrix obtained from $M$ by simultaneously removing two rows and two columns;
namely those corresponding to $1$ and $\ell$.\\

Our new estimate concerns pfaffians of skew-symmetric matrices $M \in \mathbb{C}^{2(n+1) \times 2(n+1)}$ 
with the following block structure:
\begin{equation}\label{eq:skewPfM}
M = \left( \begin{matrix}  0 & \alpha & v_1^T & v_2^T \\
 &0 & w_1^T & w_2^T \\
& & A & B \\
&  &  & C \end{matrix} \right) 
\end{equation}
where $ \alpha \in \mathbb{C} $, the columns $ v_1, w_1 \in  \mathbb{C}^{2p} $ while $ v_2, w_2 \in  \mathbb{C}^{2q}$ with $p+q=n$,
and the blocks  $ A \in  \mathbb{C}^{2p\times2p} $, $ B    \in  \mathbb{C}^{2p\times2q} $, and $ C \in  \mathbb{C}^{2q\times2q}$ with
both $A$ and $C$ also skew-symmetric. As the remainder of the matrix is determined through skew-symmetry, we leave it blank.

We will assume that $ M $ models a correlation matrix which, in particular, entails that the modulus of its pfaffian is bounded. More generally, the following notion is tailored for our purposes.
\begin{definition}
A skew-symmetric matrix $ M \in \mathbb{C}^{2n\times2n} $ is said to have a correlation structure of depth $ k \in \{0,1,\dots , n\} $ with constant $ M_0 \in (0,\infty) $ if the pfaffian of all sub-matrices $ M_{\hat j_1, \hat j_2 , \dots, \hat j_{2l} }$ which result from simultaneously eliminating the rows and columns labeled $ j_1,  j_2 , \dots,  j_{2l} $  satisfy
\begin{equation}
\left| \pf M_{\hat j_1, \hat j_2 , \dots, \hat j_{2l}}  \right| \leq M_0
\end{equation}
for all disjoint integers $  j_1,  j_2 , \dots, j_{2l} \in \{ 1, \dots , 2n \} $ and all 
$ l \in \{ 0, \dots , k\}  $. The case $ l = 0 $ by definition corresponds to no eliminations, i.e. the bound $|\pf M| \leq M_0 $.  
\end{definition} 
We then have the following  result. 
\begin{theorem}\label{thm:Pfaffians}
Let $ M \in   \mathbb{C}^{2(n+1)\times2(n+1)} $ be a skew-symmetric matrix which has the block-structure~\eqref{eq:skewPfM} and a correlation structure of depth $ 2 $ with constant $ M_0$.
Then
\begin{equation} \label{thm:pfbd}
\left| \pf M \right| \leq M_0  \left( |\alpha| + \| v_2 \|_1+  \| v_1 \|_1 \| w_1 \|_1  + \| v_1 \|_1 \| w_2 \|_1 \sum_{j=1}^{2p} \| r_j(B) \|_1 \right) \, , 
\end{equation} 
where $ r_j(B) \in \mathbb{C}^{2q} $ are the row vectors of $  B    \in  \mathbb{C}^{2p\times2q} $ and $ \| \cdot \|_1 $ denotes the $ 1 $-norm.  
\end{theorem}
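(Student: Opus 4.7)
The plan is to estimate $|\pf M|$ by two iterated pfaffian Laplace expansions, using the depth-2 correlation hypothesis on each sub-pfaffian that appears. Expanding along row 1 of $M$ (entries $(0,\alpha,v_1^T,v_2^T)$) gives
\begin{equation*}
\pf M\ =\ \alpha\,\pf M_{\hat 1\hat 2}\ +\ \sum_{i=1}^{2p}(-1)^{i}(v_1)_i\,\pf M_{\hat 1\widehat{i+2}}\ +\ \sum_{j=1}^{2q}(-1)^{j}(v_2)_j\,\pf M_{\hat 1\widehat{j+2p+2}},
\end{equation*}
and the depth-1 bound $|\pf M_{\hat 1\hat 2}|,|\pf M_{\hat 1\widehat{j+2p+2}}|\leq M_0$ immediately delivers the terms $M_0|\alpha|$ and $M_0\|v_2\|_1$ of the claim.

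For the $v_1$-contribution, in each $\pf M_{\hat 1\widehat{i+2}}$ I would perform a second Laplace expansion along its first row, which is row 2 of $M$ restricted to the surviving columns and reads $(0,\tilde w_1^T,w_2^T)$ where $\tilde w_1$ is $w_1$ with its $i$-th component removed. This splits the sub-pfaffian into a $\tilde w_1$-part (with the extra elimination staying inside the $A$-block) and a $w_2$-part (extra elimination in the $C$-block). In the $\tilde w_1$-case the sub-sub-pfaffian $M_{\hat 1\hat 2\widehat{i+2}\widehat{k+2}}$ has block form $\bigl(\begin{smallmatrix}\tilde A & \tilde B \\ -\tilde B^T & C\end{smallmatrix}\bigr)$ with $\tilde A$ of \emph{even} dimension $2p-2$; the depth-2 hypothesis bounds it by $M_0$, and summing over $k$ and $i$ contributes $M_0\|v_1\|_1\|w_1\|_1$ to the total.

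In the $w_2$-case, the sub-sub-pfaffian $\tilde S:=M_{\hat 1\hat 2\widehat{i+2}\widehat{l+2p+2}}$ has both $\tilde A$ (dim $2p-1$) and $\tilde C$ (dim $2q-1$) of \emph{odd} dimension, so the crude bound $|\pf\tilde S|\leq M_0$ is much too weak. The key remark is that the pfaffian of any odd-dimensional skew-symmetric matrix vanishes, so every perfect matching in the expansion of $\pf\tilde S$ must contain at least one cross-pair coming from the $\tilde B$-block. Extracting one such $\tilde B$-entry through an additional Laplace step along a row of the $\tilde A$-block, and bounding the cofactor by a depth-2 sub-pfaffian of $M$ after recycling the already-eliminated indices $i+2$ and $l+2p+2$, should yield the refined estimate
\begin{equation*}
|\pf\tilde S|\ \leq\ M_0\,\sum_{u=1}^{2p}\|r_u(B)\|_1.
\end{equation*}
Summing over $i,l$ then gives the last contribution $M_0\|v_1\|_1\|w_2\|_1\sum_u\|r_u(B)\|_1$, and the four contributions combine to the claim. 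The hard part is precisely this final step: converting the parity obstruction (``every matching of $\tilde S$ uses a $\tilde B$-edge'') into a quantitative linear-in-$B$ bound while staying inside the depth-2 regime, which requires careful bookkeeping so that picking a $\tilde B$-entry does not force the cofactor to be a depth-3 sub-pfaffian of $M$.
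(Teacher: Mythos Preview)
Your first two Laplace expansions are exactly the content of the paper's Lemma~3.3, and you correctly isolate the hard case: the sub-sub-pfaffian $\tilde S=M_{\hat1\hat2\widehat{i+2}\widehat{l+2p+2}}$ with odd-sized diagonal blocks $\tilde A\in\mathbb{C}^{(2p-1)\times(2p-1)}$ and $\tilde C\in\mathbb{C}^{(2q-1)\times(2q-1)}$.

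The genuine gap is in your treatment of this last step. A Laplace expansion of $\pf\tilde S$ along a row of the $\tilde A$-block does \emph{not} ``extract a $\tilde B$-entry'': it produces terms carrying $\tilde A$-entries (with cofactors whose blocks are again of odd sizes $(2p-3)$ and $(2q-1)$) in addition to the $\tilde B$-terms. The $\tilde A$-terms neither vanish nor come with a factor from $B$, so a straightforward recursion accumulates uncontrolled products of $\tilde A$-entries and does not yield the linear-in-$B$ bound $\sum_u\|r_u(B)\|_1$. Your parity observation (``every matching uses a $\tilde B$-edge'') is correct but the proposal gives no mechanism that turns it into that estimate; the ``recycling'' of eliminated indices is not a well-defined operation on sub-pfaffians.

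The paper's device (Lemma~3.4) is what is missing. Since $\tilde A$ is an odd-dimensional skew-symmetric matrix it has a nontrivial kernel, so its rows satisfy a relation $\sum_j\mu_j r_j(\tilde A)=0$ with not all $\mu_j$ zero. Picking $j_0$ with $|\mu_{j_0}|$ maximal and performing the corresponding simultaneous row/column operations (which preserve the pfaffian) annihilates the $\tilde A$-portion of row $j_0$ while replacing its $\tilde B$-portion by $b=\sum_j(\mu_j/\mu_{j_0})r_j(\tilde B)$, with $|\mu_j/\mu_{j_0}|\le1$. Now a single Laplace expansion along row $j_0$ has \emph{only} $\tilde B$-terms, and $\|b\|_1\le\sum_j\|r_j(\tilde B)\|_1\le\sum_{u=1}^{2p}\|r_u(B)\|_1$. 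This is precisely the quantitative conversion of the parity obstruction that your sketch lacks.

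On the depth issue you flag: the cofactor arising in this final Laplace step is $\tilde S_{\hat{j_0},\hat m}$, which is a sub-pfaffian of $M$ with six indices removed. So the paper's argument, read literally, also uses a depth-$3$ bound rather than depth~$2$; the statement of the theorem appears to understate the needed depth by one. This is harmless in the paper's application (where $M$ has correlation structure of arbitrary depth), but your ``recycling'' idea cannot repair it, and in any case the substantive gap is the missing kernel/row-operation step above, not the bookkeeping of depths.
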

The proof is based on two lemmas. The first is a straightforward implication of the multi-linearity of pfaffians as expressed in~\eqref{eq:Laplace}.
\begin{lemma}\label{lem:LaplacePf}
In the situation of Theorem~\ref{thm:Pfaffians}:
\begin{equation}\label{eq:LaplacePf}
\left| \pf M \right| \leq  ( |\alpha| + \| v_2 \|_1+ \| v_1 \|_1 \| w_1 \|_1) M_0  + \| v_1 \|_1 \| w_2 \|_1 \sup_{\substack{j \in \{1, \dots, 2p\} \\ k \in  \{2p+1, \dots, 2(p+q)\}  }} \left| \pf \left( \begin{matrix} A & B \\ & C \end{matrix} \right)_{\hat j , \hat k} \right| \, .  
\end{equation}
\end{lemma}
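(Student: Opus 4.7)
The plan is to apply the pfaffian Laplace expansion \eqref{eq:Laplace} along the first row of $M$, but to iterate it once more only on a certain subfamily of the resulting cofactors. Since the first row of $M$ reads $(0,\alpha,v_1^T,v_2^T)$, the first expansion produces, up to signs,
\[
\pf M = \alpha \, \pf M_{\hat 1 \hat 2} + \sum_{j=1}^{2p} (\pm v_{1,j})\, \pf M_{\hat 1 \widehat{j+2}} + \sum_{k=1}^{2q} (\pm v_{2,k})\, \pf M_{\hat 1 \widehat{2p+k+2}}.
\]
Each of the cofactors appearing in the $\alpha$-term and in the $v_2$-sum is a pfaffian of a submatrix of $M$ with exactly two rows/columns removed, and is therefore bounded by $M_0$ by the depth-$2$ correlation hypothesis. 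Passing to absolute values and using $\sum_k |v_{2,k}|=\|v_2\|_1$ yields a contribution of at most $(|\alpha|+\|v_2\|_1) M_0$ from these two batches of terms.

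For the $v_1$-terms I would expand once more. After removing rows/columns $1$ and $j+2$, the original second row of $M$, which by skew-symmetry reads $(-\alpha, 0, w_1^T, w_2^T)$, becomes the first row of $M_{\hat 1 \widehat{j+2}}$, with entries $(0, w_1^T\!\!\upharpoonright_{\neq j}, w_2^T)$. A second Laplace expansion along this row splits into (a) terms proportional to the remaining $w_1$-entries, whose cofactor pfaffians have four rows/columns removed from $M$ and are each bounded by $M_0$ via depth-$2$; and (b) terms proportional to $w_2$-entries, whose cofactors equal pfaffians of $\bigl(\begin{smallmatrix}A & B\\ & C\end{smallmatrix}\bigr)$ with one row/column removed from the $A$-block (the index $j$) and one from the $C$-block (the index $k$)—which is exactly the object appearing on the right-hand side of \eqref{eq:LaplacePf}.

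Summing the $w_1$-part gives $\sum_{i\neq j}|w_{1,i}|\,M_0\leq \|w_1\|_1 M_0$ and the $w_2$-part gives $\|w_2\|_1$ times the supremum in the statement, uniformly in $j$. Weighting by $|v_{1,j}|$ and summing over $j$ produces the contributions $\|v_1\|_1\|w_1\|_1 M_0$ and $\|v_1\|_1\|w_2\|_1 \sup_{j,k}|\pf(\cdot)_{\hat j,\hat k}|$. Combining with the earlier estimate for the $\alpha$- and $v_2$-terms then gives \eqref{eq:LaplacePf}.

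The main bookkeeping issue is to keep track of indices when passing between the $M$-indexing (where columns $3,\ldots,2p+2$ correspond to $v_1/w_1$) and the block-indexing of $\bigl(\begin{smallmatrix}A & B\\ & C\end{smallmatrix}\bigr)$, and to notice that the \emph{asymmetry} of the claimed bound originates in a strategic choice: for the $v_2$-terms a direct application of the depth-$2$ correlation hypothesis already gives a clean bound $\|v_2\|_1 M_0$, so performing a further Laplace expansion on them is neither needed nor beneficial; only the $v_1$-terms need to be opened up one extra step to separate the $w_1$- from the $w_2$-contributions.
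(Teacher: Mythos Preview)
Your proposal is correct and follows essentially the same argument as the paper: a Laplace expansion along the first row, a direct bound on the $\alpha$- and $v_2$-cofactors via the correlation structure, and a second Laplace expansion of each $v_1$-cofactor along its (new) first row $(0,w_1^T\!\upharpoonright_{\neq j},w_2^T)$, splitting into $w_1$-terms bounded by $M_0$ and $w_2$-terms yielding the supremum over $\pf\bigl(\begin{smallmatrix}A&B\\&C\end{smallmatrix}\bigr)_{\hat j,\hat k}$. Your remark that the four-fold deletions arising in the $w_1$-step are exactly what the depth-$2$ hypothesis covers is the only point one might want to make explicit, and you do.
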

\begin{proof}
An application of the Laplace expansion~\eqref{eq:Laplace} yields
\begin{equation}\label{eq:Lemma1Lapex}
\pf M = \alpha \pf M_{\hat 1 , \hat 2 } + \sum_{j = 1}^{2p}  (-1)^j (v_1)_j  \pf M_{  \hat 1 , \widehat{ (2+j)}}  + \sum_{j = 1}^{2q}  (-1)^j (v_2)_j  \pf M_{  \hat 1 , \widehat {(2(p+1)+j)}} \, . 
\end{equation}
Using the assumed correlation structure, the first and third term above are bounded by $| \alpha| M_0 $ and $\| v_2 \|_1 M_0 $, respectively.
For the remaining sum, we again Laplace expand along the first row of $ M_{  \hat 1 , \widehat{ (2+j)}} $:
\begin{align}\label{eq:Lemma1Lapex2}
\pf M_{  \hat 1 , \widehat{ (2+j)}} & = \pf \left( \begin{matrix} 0 & w_1^T & w_2^T \\ & A & B \\ & & C \end{matrix} \right)_{\widehat{ (1+j)} }  \\
& = \sum_{\substack{ k= 1 \\ k \neq j }}^{2p} (-1)^{\epsilon_j(k)} (w_1)_k   \pf M_{  \hat 1 , \hat 2 , \widehat{ (2+j)}, \widehat {(2+k) }}  + \sum_{k=1}^{2q} (-1)^k  (w_2)_k \pf \left( \begin{matrix}  A & B \\  & C \end{matrix} \right)_{\hat j , \widehat (2p+k) } \, ,\notag 
\end{align}
with suitable exponents $ \epsilon_j(k) \in \{ 0,1\} $. The first sum is bounded by $ \|w_1\|_1 M_0 $.  This gives rise to the third term in~\eqref{eq:LaplacePf}.  
The second sum in \label{eq:Lemma1Lapex} is 
bounded by $ \| w_2 \|_1 $ times the supremum in the right side of~\eqref{eq:LaplacePf}.
\end{proof}

It remains to estimate the pfaffians which appear on the right side of \eqref{eq:LaplacePf}.
\begin{lemma}\label{eq:OffDPf}
Consider a skew-symmetric matrix $ D \in \mathbb{C}^{2n \times 2n} $ of the form
\begin{equation}
D = \begin{pmatrix} A & B \\ \mbox{} & C \end{pmatrix}  \, , 
\end{equation} 
with blocks $A \in \mathbb{C}^{(2p-1) \times (2p-1)} $, $ B \in \mathbb{C}^{(2p-1) \times (2q-1)} $, and $C \in \mathbb{C}^{(2q-1) \times (2q-1)}$. 
(Here all integers $n,p,q \geq 1$ and $n = p+q-1$.) If, in addition, $D$ has a correlation structure of depth $1$ with constant $ K $, then 
\begin{equation}
\left| \pf D  \right| \leq K  \sum_{j=1}^{2p-1} \| r_j(B) \|_1 \, 
\end{equation}
where $r_j(B)$ denotes the $j$-th row of $B$.
\end{lemma}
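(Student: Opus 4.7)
The plan is to exploit the fact that $A$, being skew-symmetric of odd dimension $2p-1$, is singular and hence admits a non-zero null vector $v \in \mathbb{C}^{2p-1}$ with $Av = 0$ (equivalently $v^T A = 0$, using $A^T = -A$). I will use $v$ to perform a sequence of pfaffian-preserving row-column operations on $D$ that eliminate the $A$-block portion of the first row of $D$; a single Laplace expansion then yields the bound in terms of $B$-rows only.

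First I would relabel the indices of the $A$-block so that $|v_1| = \max_i |v_i|$, in particular $v_1 \neq 0$. This relabeling amounts to conjugating $D$ by the block-permutation matrix $\bigl(\begin{smallmatrix} P & 0 \\ 0 & I\end{smallmatrix}\bigr)$, which changes the pfaffian by at most a sign, merely permutes the rows of $B$ among themselves, and preserves both the depth-$1$ correlation constant $K$ and the target $\sum_j \|r_j(B)\|_1$. Then, for each $i = 2,\dots,2p-1$, I would add $(v_i/v_1)$ times row $i$ to row $1$ of $D$ and simultaneously $(v_i/v_1)$ times column $i$ to column $1$. By property (3) of pfaffians listed above, each such paired operation preserves the pfaffian, and it modifies only row $1$ and column $1$. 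The resulting matrix $D'$ therefore satisfies $\pf D' = \pf D$ and $D'_{\hat 1, \hat \ell} = D_{\hat 1, \hat \ell}$ for every $\ell \neq 1$.

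By construction, the new first row of $D'$ has $A$-block part equal to $(v^T A)/v_1 = 0$ and $B$-block part equal to $(v^T B)_k/v_1$ for $k = 1, \dots, 2q-1$. A Laplace expansion of $\pf D'$ along row $1$ therefore collapses to a sum over $B$-terms only:
\begin{equation*}
\pf D \;=\; \pf D' \;=\; \sum_{k=1}^{2q-1} (-1)^{2p-1+k}\, \frac{(v^T B)_k}{v_1}\, \pf D_{\hat 1, \widehat{2p-1+k}}.
\end{equation*}
The depth-$1$ hypothesis gives $|\pf D_{\hat 1, \widehat{2p-1+k}}| \leq K$ for every term. Combined with $|(v^T B)_k| \leq \sum_i |v_i|\,|B_{i,k}|$ and $|v_i|/|v_1| \leq 1$ (by our choice of indexing), this yields
\begin{equation*}
|\pf D| \;\leq\; \frac{K}{|v_1|} \sum_{k,i} |v_i|\,|B_{i,k}| \;=\; K \sum_{i=1}^{2p-1} \frac{|v_i|}{|v_1|}\, \|r_i(B)\|_1 \;\leq\; K \sum_{i=1}^{2p-1} \|r_i(B)\|_1,
\end{equation*}
which is the claim.

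The main obstacle is the initial idea: recognizing that a null vector of $A$ provides exactly the linear combination needed to eliminate, via pfaffian-preserving row-column operations, the $A$-entries of a chosen row of $D$. After that, the argument reduces to a single Laplace expansion and one use of the depth-$1$ bound; the only technical subtlety is checking that, because the operations modify only row $1$ and column $1$, the sub-matrices $D_{\hat 1, \hat \ell}$ are left intact, so that the depth-$1$ estimate applies to $D'$ verbatim.
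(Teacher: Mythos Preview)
Your proposal is correct and follows essentially the same approach as the paper: exploit that the odd-dimensional skew-symmetric block $A$ is singular to obtain a linear dependence among its rows, normalize so the largest coefficient sits at index $1$, perform paired row/column operations to clear the $A$-portion of the first row, then Laplace expand and invoke the depth-$1$ bound together with $|v_i/v_1|\leq 1$. Your write-up is in fact slightly more careful than the paper's in explicitly noting that the operations touch only row and column $1$, so that the submatrices $D_{\hat 1,\hat\ell}$ entering the Laplace expansion coincide with those of the original $D$ and the correlation-structure bound applies verbatim.
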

\begin{proof}
Since $A \in \mathbb{C}^{(2p-1)\times(2p-1)}$ is skew-symmetric, its kernel is non-trivial. 
By the rank-nullity theorem, the range of $A$ has dimension smaller or equal to $ 2p-2$,
and consequently the columns of $A$ and the rows of $A$
are linearly dependent. Thus, there are $ \mu_1, \dots , \mu_{2p-1} \in \mathbb{C}$, not all zero, for which
\begin{equation} \label{ldrows}
\sum_{j=1}^{2p-1} \mu_j r_j(A) = 0 \,  
\end{equation}
where we have denoted by $r_j(A)$ the $j$-th row of $A$. As the $\mu_j$ do not all vanish, choose 
$ j_0 \in \{ 1, \dots , 2p-1 \} $ satisfying $ |\mu_j| \leq |\mu_{j_0} | \neq 0 $ for all $ j $. 
Without loss of generality, we will assume that $ j_0 = 1 $. Since the pfaffian is invariant under joint row/column operations, 
we may use (\ref{ldrows}) to eliminate the first row/column of $D$. In fact,
\begin{equation}\label{eq:Lem2row}
\pf D = \pf \left( \begin{matrix}  0 & 0^T & b_1^T \\ & \widehat A & \widehat B \\ & & C \end{matrix}\right)
\end{equation}
where $ \widehat A = A_{\hat 1} \in \mathbb{C}^{(2p-2)\times (2p-2) } $ is the sub-matrix of $ A $ obtained by deleting the first row and column and 
$  \widehat B \in \mathbb{C}^{(2p-2)\times (2q-1) } $ is the matrix with row vectors $ r_2(B) , \dots , r_{2p-1}(B) \in \mathbb{C}^{2q-1} $. 
Moreover, by inspection, it is clear that the first row is given by
\begin{equation}
b_1^T = r_1(B) + \sum_{j \geq 2} \frac{\mu_j}{\mu_1} r_j(B) \, . 
\end{equation}
We now Laplace expand along the first row on the right side of \eqref{eq:Lem2row} and obtain
\begin{equation}
\pf D  = \sum_{k=1}^{2q-1} (-1)^{k} (b_1 )_k \pf A_{\hat 1, \widehat{2p-1+k}} \, .  
\end{equation}
Using the correlation structure for $D$ and the fact that $ |\mu_j|\leq |\mu_1| $ we arrive at
\begin{equation}
 | \pf D | \leq K \, \| b_1 \|_1 \leq K  \sum_{j=1}^{2p-1} \| r_j(B) \|_1 \, , 
 \end{equation}
 which concludes the proof.
\end{proof}

Combining these two lemmas, there is a short proof of Theorem~\ref{thm:Pfaffians}.

\begin{proof}[Proof of Theorem~\ref{thm:Pfaffians}:]
Using Lemma~\ref{lem:LaplacePf}, it is clear that we need only estimate the supremum on the right side of~\eqref{eq:LaplacePf}. 
Each of the corresponding pfaffians is of a skew-symmetric matrix satisfying the assumptions of Lemma~\ref{eq:OffDPf} with $ K = M_0 $.
\end{proof}

\section{Proof of decay of multipoint correlation functionals}

\subsection{Proof of Theorems~\ref{thm:det} and \ref{thm:detrandom}}
We organize the proofs of Theorems~\ref{thm:det} and \ref{thm:detrandom} similarly. 
In each, for any $n \geq 1$, we consider a pair of configurations 
$x = (x_1, x_2, \cdots, x_n)$ and $y = (y_1, y_2, \cdots, y_n)$ both in $\mathbb{Z}^n$ which we assume to be fermionic and ordered, cf.~\eqref{eq:order1}. 
The configuation's distance is attained at an optimizing pair $ j_o \in \{ 1, \dots , n \} $, i.e.
\begin{equation} \label{simdist}
D(x , y ) =  \max_{1 \leq j \leq n} |x_j - y_j| = | x_{j_o} - y_{j_o} | \, . 
\end{equation}
Without loss of generality, we will assume that
\begin{equation} \label{xleft}
x_{j_o} \leq  y_{j_o} 
\end{equation}
since the roles of $ x $ and $ y $ may be interchanged in the case that $ x_{j_o} >  y_{j_o} $.  For convenience of notation we will relabel the particles 
\begin{equation} \label{relabel}
 j_o \mapsto 1 \, , \quad  (1, \dots , j_o-1) \mapsto (2, \dots , j_o) \, , \quad\mbox{and}\quad   (j_o+1,\dots, n ) \mapsto (j_o+1,\dots, n ) \, . 
 \end{equation}
After this relabeling the $ n \times n $ correlation matrix which we are interested in is given by
$
M = \left( \omega(j,k) \right)_{1\leq j,k\leq n}$ with $ \omega(j, k) :=  \langle \delta_{x_j}, \rho(s_j,t_k)  \delta_{y_k} \rangle $.
It has the following structure:
\begin{equation}\label{eq:defM}
M = \ \left( \begin{matrix}  \alpha & v_1^T & v_2^T \\
w_1 & A & B \\
w_2 & C & D \end{matrix} \right) 
\end{equation}
where we have set 
\begin{align} \label{label}
&  \alpha :=  \omega(1 ,1 )  \, , \quad  v  =   \left(  \begin{matrix} \omega(1 ,2)  \\ \vdots \\   \omega(1 ,n) \end{matrix} \right) \, , \quad
  w =   \left(  \begin{matrix}   \omega(2 , 1 )  \\ \vdots \\      \omega(n , 1 )  \end{matrix} \right) \, , \notag \\
 \mbox{and} & \quad  \left( \begin{matrix} A & B \\  C & D \end{matrix} \right)  \  =  \ \left(  \omega(j, k)\right)_{2\leq j , k \leq n } \, . 
\end{align}
 The sub-decomposition of the vectors $ v = (v_1,v_2)^T $ and $ w = (w_1,w_2)^T $ and thereby also the matrix $M$ into the
 blocks $A, B, C, D$ is done according to the following rule in which we use the above relabeling
\begin{align}\label{eq:labbelingrule}
& \mbox{$\omega(1, k)$ is a component of $v_1 $ if and only if } y_k \leq x_{1} + \tfrac{1}{2} \, d(x_{1} , y_{1} ) \, , \notag \\
&  \mbox{$\omega(k, 1)$ is a component of $w_1 $ if and only if } y_k \leq x_{1} + \tfrac{1}{2} \, d(x_{1} , y_{1} ) \, .
\end{align}
This renders $ A $ and $ D $ square matrices. Moreover, we have the following estimates: 
\begin{lemma}\label{lem:vec}
Let $x$ and $y$ be fermonic, ordered configurations in $\mathbb{Z}^n$, i.e. such that (\ref{eq:order1}) holds, and
set $ \delta := D(x,y)$. With respect to the relabeling introduced in (\ref{simdist}) - (\ref{eq:labbelingrule}) above, one has
\begin{align}
\| v_2 \| \ & \leq \sum_{\ell > \delta/ 2}  \hat \rho(x_1,x_1+\ell)   \, , \quad
\| w_1 \|  \leq \sum_{\ell > \delta}   \hat \rho(y_1-\ell ,y_1)     \label{eq:w11}  \\
\| B \| &   \leq  \sum_{\ell \geq 1} \sum_{\ell' > \delta/2}    \hat \rho(x_{1}-\ell,x_{1}+\ell') \, , \label{eq:nB}
\end{align}
where $ \hat \rho(x_j,y_k) := \sup_{s,t\in I } | \langle \delta_{x_j}, \rho(s,t)  \delta_{y_k} \rangle | $ and $\ell, \ell' \in \mathbb{Z}$.
\end{lemma}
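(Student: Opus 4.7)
The proof rests on three elementary ingredients: (i) the combination of the ordering assumption \eqref{eq:order1} with the partition rule \eqref{eq:labbelingrule} forces a \emph{two-sided} separation between the rows/columns assigned to the two blocks; (ii) the operator norm of a matrix is controlled by the entry-wise $\ell^1$ norm (via the Frobenius norm); and (iii) the fact that the $x_k$'s and $y_k$'s are distinct integers lets one convert a sum over a subset of particles into a sum over all lattice sites in a half-line. My plan is to establish (i) as a short geometric lemma and then apply (ii)–(iii) three times, one for each norm.

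First, I would observe that for every $k \in S_1 := \{k \in \{2,\dots,n\} : y_k \leq x_1 + \tfrac12\delta\}$, the ordering~\eqref{eq:order1} (which survives the relabeling~\eqref{relabel} within each of its two halves) together with $x_1 \leq y_1$ forces $x_k \leq x_1 - 1$. Indeed, $y_k \leq x_1 + \tfrac12\delta < x_1 + \delta = y_1$, so by the strictly increasing nature of the $y$-sequence the new index $k$ corresponds to an old index $< j_o$; the parallel strict ordering of the $x$-sequence then yields $x_k < x_1$, and since both are integers, $x_k \leq x_1 - 1 = y_1 - \delta - 1$. This is the one step that genuinely uses the one-dimensional, fermionic, ordered structure, and it is essentially where all the delicate bookkeeping of the labeling rule is confined.

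Second, I would bound the two vector norms. Using $\|v\|_2 \leq \|v\|_1$ together with the entry-wise estimate $|\omega(j,k)| \leq \hat\rho(x_j,y_k)$:
\begin{equation*}
\|v_2\| \leq \sum_{k \in S_2} \hat\rho(x_1, y_k), \qquad \|w_1\| \leq \sum_{k \in S_1} \hat\rho(x_k, y_1).
\end{equation*}
For $v_2$, the indices $y_k$, $k \in S_2$, are distinct integers with $y_k > x_1 + \tfrac12\delta$; writing $y_k = x_1 + \ell$ and summing over all such integers proves \eqref{eq:w11}$_1$. For $w_1$, the first step gives distinct integers $x_k \leq y_1 - \delta - 1$; setting $\ell := y_1 - x_k \geq \delta + 1$ and again enlarging the sum to all admissible integers yields \eqref{eq:w11}$_2$.

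For $\|B\|$ I would apply $\|B\|_{\rm op} \leq \|B\|_F \leq \sum_{j,k}|B_{j,k}|$ (the last step using $\sum a_{j,k}^2 \leq (\sum a_{j,k})^2$ for nonnegative entries). The rows of $B$ are indexed by $j \in S_1$, so $x_j \leq x_1 - 1$, and its columns by $k \in S_2$, so $y_k > x_1 + \tfrac12\delta$. Writing $x_j = x_1 - \ell$ with $\ell \geq 1$ and $y_k = x_1 + \ell'$ with $\ell' > \tfrac12\delta$, distinctness of each configuration lets me enlarge to an unconstrained double sum, yielding \eqref{eq:nB}. The only real obstacle is the first step — making sure the $y_k$-based labeling rule \eqref{eq:labbelingrule} automatically separates the $x$-coordinates as well — but once that is in place, the remaining estimates are routine norm comparisons combined with counting integers in half-lines.
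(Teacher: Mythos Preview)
Your argument is correct and follows essentially the same route as the paper's own proof: identify the index sets via the labeling rule, use the implication ``$y_k \le x_1+\tfrac12\delta \Rightarrow x_k < x_1$'' (which the paper simply asserts while you justify it carefully through the relabeling), bound the Euclidean norm by the $\ell^1$-norm of the entries, bound $\|B\|$ by the Frobenius norm, and finally replace sums over the distinct $x_k$'s or $y_k$'s by sums over integer half-lines. Your write-up is in fact more explicit than the paper on the one nontrivial point, namely why the $y$-based partition rule also separates the $x$-coordinates.
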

\begin{proof}
Recall that wlog $\delta = y_1 - x_1$, see (\ref{xleft}). By the labeling rule~\eqref{eq:labbelingrule}, we have that:
\begin{enumerate}
\item The components of $v_2 $ correspond to $ y_k > x_{1} + \delta/2 $. Thus,
\begin{equation}
\| v_2 \|  = \sqrt{ \sum_{\stackrel{k :}{ y_k > x_1+ \delta/2}} | \langle \delta_{x_1}, \rho(s_1,t_k)  \delta_{y_k} \rangle |^2 } \leq \sum_{\stackrel{k :}{ y_k > x_1 + \delta/2}}\hat \rho(x_1,y_k)  \, . 
\end{equation}
Since the configuration $ y $ is assumed to be fermionic, i.e., $ y_j \neq y_k $ for all $ j \neq k $, the right side above is trivially estimated by the first term in~\eqref{eq:w11}. 
\item  The components of $w_1$ correspond to $ y_j \leq x_1 +  \delta/2 < y_1 $. In this case, it must be that $ x_j < x_{1} $. As a result,
\begin{equation}
\| w_1 \|  \leq \sum_{\stackrel{j :}{ x_j < x_{1}}}  \hat \rho(x_j,y_{1}) 
\end{equation}
similar to before. This is clearly bounded by the right side of~\eqref{eq:w11}.
\item The components of $B$ are of the form $\omega(j, k)$ corresponding to 
$ y_k > x_1 + \delta/2 $ and $ x_j < x_1 $. Since the operator norm of $ B $ is trivially bounded by 
the Frobenius norm $ \| B \|_2 $ and moreover,
\begin{equation}
 \| B \|_2 = \sqrt{ \sum_{\stackrel{j :}{ x_j < x_{1}}}  \sum_{\stackrel{k :}{ y_k > x_{1} + \delta/2}}  | \langle \delta_{x_j}, \rho(s_j,t_k)  \delta_{y_k} \rangle  |^2 } \leq \sum_{\stackrel{j :}{ x_j < x_{1}}}  \sum_{\stackrel{k :}{ y_k > x_{1} + \delta/2}} \hat \rho( x_j,y_k)  \, ,
\end{equation}
(\ref{eq:nB}) readily follows. 
\end{enumerate} 
\end{proof}

Using the above lemma and our general estimate for determinants (Theorem~\ref{Had2.0}), we can now easily prove
Theorems~\ref{thm:det} and \ref{thm:detrandom}.

\begin{proof}[Proof of Theorem~\ref{thm:det}]
Fix a pair of fermonic, ordered configurations $x$ and $y$ in $\mathbb{Z}^n$. With respect to the relabeling
described in (\ref{simdist})- (\ref{eq:labbelingrule}) above, denote by $M$ the corresponding correlation matrix which has the
structure of \eqref{eq:defM}. Since $ \| \rho(s,t) \| \leq 1 $, it is clear that $ \| M \| \leq 1 $. 
Using Theorem~\ref{Had2.0}, we have the estimate
\begin{equation}
|\det M | \leq  \  | \alpha| + \| v_2 \| + \| w_1 \| + \|B \| + 2 \sqrt{ \| v_1 \|( \| w_1\| + \| B \|)}  \, .
\end{equation}
With Lemma~\ref{lem:vec} and the assumed decay, i.e. \eqref{eq:expdecay}, we have with $ \delta := D(x,y)$, 
\begin{align}\label{eq:detlastest}
| \alpha| + \| v_2 \| + \| w_1 \| + \|B \| & \leq C e^{- \mu K(\delta) } + C \sum_{\ell =0}^{\infty} e^{- \mu K(l +\delta/2)} \notag \\
& \quad + C \sum_{\ell =0}^{\infty} e^{- \mu K(l +\delta)} + C \sum_{\ell, \ell' =0}^{\infty} e^{- \mu K ( \ell + \ell'+  \delta/2)} \notag \\
&\leq C e^{-(\mu-\mu_0) K(\delta/2) } \left( 1 + 2 \sum_{l=0}^\infty e^{- \mu_0 K(l)} +   \sum_{l=0}^\infty (1+\ell) e^{- \mu_0 K(\ell)} \right) \notag \\
& \leq 4 C I(\mu_0) \, e^{-(\mu-\mu_0) K(\delta/2) } \, . 
\end{align}
Similarly, since $\| v_1 \| \leq \| r_1(M) \| \leq \|M \| \leq 1$, the bound
\begin{eqnarray}
2 \sqrt{ \| v_1 \|( \| w_1\| + \| B \|)} & \leq &  2 \sqrt{ \| w_1 \| + \| B \|} \nonumber \\
& \leq & 4 \sqrt{C I(\mu_0)} \, e^{- (\mu -\mu_0) K(\delta/2)/2} \, . 
\end{eqnarray}
follows. This completes the proof.
\end{proof}

Proceeding similarly, one has the following.
\begin{proof}[Proof of Theorem~\ref{thm:detrandom}]
We again denote by $M =M_{s,t} $ the correlation matrix (making the time-dependence explicit).
Arguing as above, the bound
\begin{align}
&\mathbb{E} \left( \sup_{s,t \in I^n} |\det M_{s,t}| \right)  \leq  \mathbb{E} \left( \sup_{s, t \in I} | \alpha_{s,t} | \right) +  \mathbb{E} \left( \sup_{s\in I , t\in I^{n} } \| (v_{s,t})_2\| \right) +  \mathbb{E} \left( \sup_{t\in I , s\in I^{n} } \| (w_{s,t})_1 \| \right) \notag \\
&   +  \mathbb{E} \left( \sup_{s,t \in I^n} \| B_{s,t} \| \right) + 2 \sqrt{ \mathbb{E} \left( \sup_{t \in I, s\in I^n } \| (w_{s,t})_1 \| + \sup_{s,t \in I^n} \| B_{s,t} \| \right)} 
\end{align}
readily follows from an application of Jensen's inequality. With Lemma~\ref{lem:vec} and the
a priori estimate \eqref{ass:dis}, we find that
\begin{eqnarray}
\mathbb{E} \left( \sup_{s,t \in I^n} | {\rm det}M_{s,t}| \right) & \leq & C e^{ - \mu \delta} + C \frac{ e^{ - \mu \frac{\delta}{2} } }{1-e^{-\mu}} + C \frac{ e^{ - \mu \delta} }{1-e^{-\mu}} \nonumber \\
& \mbox{ } & \quad +  C \frac{ e^{ - \mu \frac{\delta}{2} } }{(1-e^{-\mu})^2} + 2\sqrt{ C \frac{ e^{ - \mu \delta } }{1-e^{-\mu}} + C \frac{ e^{ - \mu \frac{\delta}{2} } }{(1-e^{-\mu})^2}}
\nonumber \\
& \leq & \frac{8 \max\{ C, \sqrt{C} \}}{(1-e^{- \mu})^2} e^{- \mu \frac{\delta}{4}}
\end{eqnarray}
This completes the proof.
\end{proof}

\subsection{Proof of Theorem~\ref{thm:pf} and \ref{thm:majodis}}

Without loss of generality we will assume that the configuration $ x \in \mathbb{Z}^{2n} $ is ordered according to~\eqref{eq:ordermaj}. 
There is some $ j_o \in \{ 1, \dots , n \}  $ such that $ r(x) = | x_{2j_o} - x_{2j_o-1} | $. 
We may now relabel $ j_o \mapsto 1 $, $ (1, \dots , j_o-1) \mapsto (2, \dots , j_o) $, and $ (j_o+1,\dots, n ) \mapsto (j_o+1,\dots, n ) $, such that the skew-symmetric $ 2n \times 2n $ matrix featuring in Theorem~\ref{thm:pf}, which has entries $  \omega(j, k) :=  \omega\left(   a_{x_j}^{\#_j}(t_j)  a_{x_k}^{\#_k}(t_k) \right)$, 
can be assumed to have the following block structure
\begin{equation}\label{eq:defM1}
M \ = \  \left( \begin{matrix}  0 & \alpha & v_1^T & v_2^T \\
 &0 & w_1^T & w_2^T \\
& & A & B \\
&  &  & C \end{matrix} \right) 
\end{equation}
with $  \alpha := \omega(1,2)  $ and
\begin{align}
 &  v_1  =   \left(  \begin{matrix} \omega(1,3)   \\ \vdots \\   \omega(1,2j_o) \end{matrix} \right) \, , \; v_2  =   \left(  \begin{matrix} \omega(1,2j_o+1)   \\ \vdots \\   \omega(1,2n) \end{matrix} \right) \, \;
   w_1 =   \left(  \begin{matrix}   \omega(2 , 3 )  \\ \vdots \\     \omega(2 , 2j_o)  \end{matrix} \right) \, , \; w_2  =   \left(  \begin{matrix} \omega(2,2j_o+1)   \\ \vdots \\   \omega(2,2n) \end{matrix} \right) \notag \\
   &  \mkern50mu \mbox{and} \quad \left( \begin{matrix} A & B \\   & C \end{matrix} \right)  \  =  \ \left(  \omega(j,k) \right)_{3\leq j < k \leq 2n } \, . 
\end{align}
This decomposition has the property that the following norms are small ($ e^{-\mu K(r(x))} $) under the assumed decay of the two-point function:
\begin{align}
& \| v_2 \|_1 \  := \sum_{k = 2j_o+1}^{2n}  |  \omega(1,k) | \, , \; \| w_1 \|_1  := \sum_{k=3}^{2j_o} |   \omega(2,k) |    \label{eq:w1}  \\
& \| B \|_{\infty,\infty} :=  \sum_{k=3}^{2j_o} \sum_{l = 2j_o+1}^{2n} |  \omega(k,l) | \, . 
\end{align}
More precisely, we have
\begin{lemma}\label{lem:vec33}
For any Majorana configuration labelled such that \eqref{eq:ordermaj}  holds and $  r(x) = |x_{2j_o}-x_{2j_o-1}| $, we have
\begin{align}
&\| v_1 \|_1 \  \leq 2 \sum_{\ell=0}^\infty  \hat \rho(x_{2j_o-1},x_{2j_o-1}-\ell)  \, , \quad  && \| v_2 \|_1 \  \leq 2 \sum_{\ell =0}^\infty  \hat \rho(x_{2j_o-1},x_{2j_o}+\ell)   \, , \notag \\
&\| w_1 \|_1   \leq 2  \sum_{\ell =0}^\infty    \hat \rho(x_{2j_o} ,x_{2j_o-1}-\ell)     \, , \quad && \| w_2 \|_1 \  \leq 2 \sum_{\ell =0}^\infty  \hat \rho(x_{2j_o},x_{2j_o}+\ell)   \notag  \\
& \| B \|_{\infty,\infty}    \leq 4 \sum_{\ell , \ell=0}^\infty   \hat \rho(x_{2j_o-1}-\ell,x_{2j_o}+\ell') \, , && \label{eq:nB33}
\end{align}
where $ \hat \rho(x_j,y_k) := \max_{\#,\flat \in \{\pm\}} \sup_{s,t\in I } | \langle \delta_{x_j}^\#, \rho(s,t)  \delta_{y_k}^\flat \rangle | $.
\end{lemma}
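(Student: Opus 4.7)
The plan is to simply unpack the block structure in \eqref{eq:defM1} combined with the ordering hypothesis \eqref{eq:ordermaj} on the Majorana configuration. The key is a book-keeping argument: after the relabeling described above \eqref{eq:defM1}, the rows and columns indexed by $3,\dots,2j_o$ correspond to old indices $1,\dots,2j_o-2$, whose $x$-coordinates satisfy $x_k \leq x_{2j_o-1}$ by \eqref{eq:ordermaj}; while rows and columns indexed by $2j_o+1,\dots,2n$ keep their old labels and correspond to $x$-coordinates $x_k \geq x_{2j_o}$. Thus every entry of $v_1$ and $w_1$ is of the form $\omega(a_{x_{2j_o-1}}^{\#}(\cdot)a_{\xi}^{\flat}(\cdot))$ or $\omega(a_{x_{2j_o}}^{\#}(\cdot)a_{\xi}^{\flat}(\cdot))$ with $\xi \leq x_{2j_o-1}$, while $v_2,w_2$ involve the same constructions with $\xi \geq x_{2j_o}$, and $B$ couples one of each type.

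The next step is to bound each entry trivially by $\hat{\rho}(\cdot,\cdot)$ and convert the sum over Majorana labels into a sum over integer positions. Here I would invoke the defining property of a Majorana configuration, namely that the tuples $(x_j,\#_j)$ are pairwise distinct: this means that at each integer point $\xi\in\mathbb{Z}$ at most two indices $k$ can contribute (one for $\#_k=+$, one for $\#_k=-$). Hence for $v_1$ we get
\begin{equation*}
\|v_1\|_1 \leq \sum_{\xi\leq x_{2j_o-1}}\#\{k:x_k=\xi\}\,\hat{\rho}(x_{2j_o-1},\xi) \leq 2\sum_{\xi\leq x_{2j_o-1}}\hat{\rho}(x_{2j_o-1},\xi),
\end{equation*}
and after reparametrizing $\xi=x_{2j_o-1}-\ell$ with $\ell\in\mathbb{N}_0$ the first asserted bound follows. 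The bounds on $\|v_2\|_1$, $\|w_1\|_1$, $\|w_2\|_1$ are obtained in exactly the same way, using the appropriate position ($x_{2j_o-1}$ or $x_{2j_o}$) and the appropriate range (left of $x_{2j_o-1}$ or right of $x_{2j_o}$), with the reparametrization $\xi=x_{2j_o}+\ell$ on the right half.

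For the estimate on $\|B\|_{\infty,\infty}$ (which as defined in \eqref{eq:w1} is really the sum of all $|B_{jk}|$), one applies the same two-Majoranas-per-site argument independently on the row index (ranging over $\xi\leq x_{2j_o-1}$) and the column index (ranging over $\xi'\geq x_{2j_o}$), producing the factor $2\cdot 2=4$ and the double sum $\sum_{\ell,\ell'\geq 0}\hat{\rho}(x_{2j_o-1}-\ell,x_{2j_o}+\ell')$. I do not expect a real obstacle here: the argument is combinatorial bookkeeping, and the only point requiring care is keeping track of how the relabeling in the block decomposition interacts with the ordering \eqref{eq:ordermaj}, in particular the fact that coincidences $x_{2j_o-2}=x_{2j_o-1}$ or $x_{2j_o+1}=x_{2j_o}$ are allowed but harmless because they are absorbed into the factor of $2$ per integer site.
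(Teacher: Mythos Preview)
Your proposal is correct and matches the paper's own proof, which is a single sentence: ``Since there are no more than two Majorana Fermions (in fact of opposite flavor $\#$) on each lattice site, the claim immediately follows from the definition of the vectors.'' You have simply spelled out this bookkeeping in detail, including the careful tracking of the relabeling and the observation that possible coincidences at the boundary sites are absorbed into the factor of $2$; nothing further is needed.
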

\begin{proof}
Since there are no more than two Majorana Fermions (in fact of opposite flavor $ \# $) on each lattice site, the claim immediately follows from the definition of the vectors. 
\end{proof}

Using the estimate~\eqref{eq:expdecay} on $  \hat \rho $ we may hence conclude (in a similar fashion as in~\eqref{eq:detlastest}):
\begin{align}\label{eq:pfbvec331}
\max\{ \| v_1 \|_1 , \| w_2 \|_1 \}  & \leq 2 C \sum_{\ell=0}^\infty e^{-\mu K(\ell)} \leq 2 C I(\mu_0) \, , \notag \\ 
\max\{ \| v_2 \|_1 , \| w_1 \|_1 \} & \leq 2 C \sum_{\ell=0}^\infty e^{-\mu K(r(x) + \ell)} \leq 2 C I(\mu_0) \, e^{-(\mu-\mu_0) K(r(x)) } \, , \notag \\
 \| B \|_{\infty,\infty}  & \leq 4 C   \sum_{\ell,\ell' =0}^\infty e^{-\mu K(r(x) + \ell+\ell')} \leq 4 C  I(\mu_0) \, e^{-(\mu-\mu_0) K(r(x)) } \, .  
 \end{align}
These estimates allows to apply Theorem~\ref{thm:Pfaffians} to the pfaffian of~\eqref{eq:defM1} and hence give a proof of Theorem~\ref{thm:pf}.  
\begin{proof}[Proof of Theorem~\ref{thm:pf}]
As discussed above the skew-symmetric $ 2n \times 2n $ matrix featuring in Theorem~\ref{thm:pf} has the block structure~\eqref{eq:defM1},
\begin{equation}
\left| \pf\left( \omega\left(   a_{x_j}^{\#_j}(t_j)  a_{x_k}^{\#_k}(t_k) \right)\right)_{1\leq j < k \leq 2n }    \right|  = \left| \pf M \right| \, . 
\end{equation}
Since $ (\omega,\tau)  $ is assumed to be a quasi-free pair and $ \omega $ is bounded by $ M_0 $ as a functional, the matrix $ M $ has a  correlation structure of arbitrary depth with constant $ M_0 $, i.e.
\begin{equation}
\left| \pf M_{\hat j_1, \hat j_2 , \dots, \hat j_{2l}}  \right| = \left| \omega\left( a_{x_1}^{\#_1}(t_1) \dots a_{x_{j_1}}^{\#_{j_1}}(t_{j_1})^2\dots a_{x_{2l}}^{\#_{j_{2l}}}(t_{j_{2l}})^2 \dots a_{x_{2n}}^{\#_{2n}}(t_{2n})  \right) \right| \leq  M_0 \, .
\end{equation}
Here we used the fact that eliminating rows and colums in the Pfaffian is equivalent to inserting the corresponding Majorana operators since $(a_x^\#(t))^2 = 1 $. Moreover, the Majorana operators are bounded by one. We may hence apply Theorem~\ref{thm:Pfaffians}. 

The norms of the bordering vectors have been estimated in~\eqref{eq:pfbvec331}. Since the sums of the norms of the row-vectors of $ B $ are bounded according to
$ \sum_j \| r_j(B) \|_1 \leq  \| B \|_{\infty,\infty} $, we conclude
\begin{multline}
	\left| \pf M \right| \leq M_0 \left( C e^{-\mu K( r(x)) } + 2 C I(\mu_0) \, (1 + 2 C I(\mu_0) ) \,  e^{-(\mu-\mu_0) K(r(x)) }  \right. \\
	\left.
	+ 16 C^3 I(\mu_0)^3  \, e^{-(\mu-\mu_0) K(r(x)) }  \right)   \, . 
\end{multline}
This yields the claim.
\end{proof}

Proceeding similarly, one has the following.
\begin{proof}[Proof of Theorem~\ref{thm:majodis}] We again denote by $M =M_{s,t} $ the correlation matrix (making the time-dependence explicit).
Arguing as above, the bound
\begin{align}
&\mathbb{E} \left( \sup_{s,t \in I^n} \frac{|\pf M_{s,t}|}{M_0}  \right)  \leq \mathbb{E} \left( \sup_{s,t \in I^n} M_0^{-\frac{1}{3} }|\pf M_{s,t}|^\frac{1}{3} \right)  \notag \\
& \leq \mathbb{E} \left( \sup_{s, t \in I} | \alpha_{s,t} | \right)^{\frac{1}{3}} +  \mathbb{E} \left( \sup_{s\in I , t\in I^{n} } \| (v_{s,t})_2\|_1 \right)^{\frac{1}{3}}   \notag \\
& \quad  + \mathbb{E} \left( \sup_{s\in I , t\in I^{n} } \| (v_{s,t})_1\|_1 \right)^{\frac{1}{3}}  \mathbb{E} \left( \sup_{t\in I , s\in I^{n} } \| (w_{s,t})_1 \|_1 \right)^{\frac{1}{3}} \notag \\
& \quad  +   \mathbb{E} \left( \sup_{s\in I , t\in I^{n} } \| (v_{s,t})_1\|_1 \right)^{\frac{1}{3}}   \mathbb{E} \left( \sup_{s\in I , t\in I^{n} } \| (w_{s,t})_2\|_1 \right)^{\frac{1}{3}}   \mathbb{E} \left( \sup_{s,t \in I^n} \| B_{s,t} \|_{\infty,\infty} \right)^{\frac{1}{3}}  
\end{align}
follows from the bound $ |\pf M_{s,t}| \leq M_0 $ and H\"older's and/or Jensen's inequality. 
The expectation values in the right side may now be estimated using Lemma~\ref{lem:vec33} and the assumption~\eqref{eq:expdecay33}. Note that since the maximum in~\eqref{eq:expdecay33} is over a finite set, we may assume that this bound also holds for the maximum inside the expectation value.  
\end{proof}

{\bf Acknowledgements:} This work was partially supported by a grant from the Simons Foundation (\#301127 to Robert Sims) and from the DFG (WA 1699/2-1 to Simone Warzel). In addition, 
R. S. would like to acknowledge the hospitality of the Mathematics Department at the University of
California at Davis where much of this work was completed during his recent sabbatical.

\end{document}